\gdef\@fpheader{}
\Crefname{equation}{Eq.}{Eqs.}
\Crefname{appendix}{Appendix}{Appendices}
\Crefname{section}{Section}{Sections}
\Crefname{theorem}{Theorem}{Theorems}
\newtheorem{theorem}{Theorem}[section]
\newtheorem{corollary}[theorem]{Corollary}
\numberwithin{equation}{section}
\newcommand{\del}{\partial}
\newcommand{\inp}[2]{\langle #1, #2 \rangle}
\DeclareMathOperator{\supp}{supp}
\begin{document}

\title{Note on Logical Gates by Gauge Field Formalism of Quantum Error Correction}
\author{Junichi Haruna}
\emailAdd{j.haruna1111@gmail.com}
\affiliation{Center for Quantum Information and Quantum Biology, University of Osaka 560-0043, Japan.}
\date{\today}

\abstract{
    The gauge field formalism, or operator-valued cochain formalism, has recently emerged as a powerful framework for describing quantum Calderbank-Shor-Steane (CSS) codes.
    In this work, we extend this framework to construct a broad class of logical gates for general CSS codes, including the $S$, Hadamard, $T$, and (multi-)controlled-$Z$ gates, under the condition where fault-tolerance or circuit-depth optimality is not necessarily imposed.
    We show that these logical gates can be expressed as exponential of polynomial functions of the electric and magnetic gauge fields, which allows us to derive explicit decompositions into physical gates.
    We further prove that their logical action depends only on the (co)homology classes of the corresponding logical qubits, establishing consistency as logical operations.
    Our results provide a systematic method for formulating logical gates for general CSS codes, offering new insights into the interplay between quantum error correction, algebraic topology, and quantum field theory.
}

\keywords{Quantum error correction, Gauge field formalism, Logical gates, CSS code, Chain complex, Lattice gauge theory}

\maketitle

\section{Introduction}

Quantum computers have advanced rapidly in recent years, promising computational advantages over classical machines for specific tasks~\cite{Shor:1994jg,Grover:1996rk}.
However, their practical realization faces significant challenges because of the fragility of quantum information and its susceptibility to noise.
Quantum error correction (QEC)~\cite{Shor:1995hbe} provides the essential framework for mitigating such noise by encoding logical qubits into protected subspaces.

Among the many QEC architectures, topological codes such as the toric code~\cite{Kitaev:1997wr} and color codes~\cite{Bombin:2006cd} have played a central role.
Their geometric locality and high noise thresholds~\cite{Dennis_2002} make them particularly compatible with superconducting-qubit platforms~\cite{barends2014superconducting,Takita_2017,acharya2024quantum}.
In parallel, recent progress in hardware based on trapped ions~\cite{Wineland:1997mg,Kielpinski:2002wbd,Ransford:2025ksn}, neutral atoms~\cite{Bluvstein:2023zmt,Rodriguez:2024bhh}, and photonic systems~\cite{Zhong:2020bnd,Alexander:2024vys} has stimulated significant interest in quantum low-density parity-check (qLDPC) codes~\cite{poulin2008iterativedecodingsparsequantum,Tillich_2014,Panteleev:2019upy}.
These codes offer constant-rate encoding and favorable distance scaling, suggesting a path toward large-scale fault-tolerant architectures.

While code design is essential, implementing logical gates on encoded qubits is equally critical.
Existing approaches such as transversal gates~\cite{Zeng:2007uho,Eastin:2009tem,Bravyi:2012rnv}, lattice surgery~\cite{Horsman_2012,Fowler_2012,Litinski_2019}, code deformation~\cite{Bombin:2009tbs}, and magic-state distillation~\cite{Knill:2004ctr,Bravyi:2004isx} provide powerful tools but typically apply only to specific code families or rely on particular geometric structures.
A general and systematic framework for constructing logical gates for general QEC codes remains lacking.

The gauge field formalism, or operator-valued cochain formalism~\cite{Barkeshli:2022edm,Zhu:2023xfg,Lin:2024uhb,Golowich:2024ogv,breuckmann2024cupsgatesicohomology,hsin2024classifyinglogicalgatesquantum,zhu2025topologicaltheoryqldpcnonclifford,Kobayashi:2025cfh}, has recently emerged as a promising route toward such a framework.
This formalism leverages the correspondence between the quantum Calderbank-Shor-Steane (CSS) codes~\cite{Calderbank_1996,Steane:1996ghp} and chain complexes, known as the CSS--homology correspondence~\cite{Kitaev1997quantum,Bombin:2006cd}, identifying qubits and stabilizers with 1-chains and 0-/2-chains of a chain complex, respectively.
Pauli operators are expressed as exponential of operator-valued cochains, which are interpreted as electric and magnetic gauge fields, and logical Pauli operators appear as Wilson loops associated with homology and cohomology classes~\cite{Barkeshli:2022edm}.
This perspective provides both algebraic clarity and a geometric interpretation reminiscent of $\mathbb{Z}_2$ lattice gauge theory~\cite{Kogut:1974ag,Kogut:1979wt} in quantum field theory.

Recent works have used this formalism to construct logical gates in various settings.
Refs.~\cite{Zhu:2023xfg,Lin:2024uhb,Golowich:2024ogv,zhu2025topologicaltheoryqldpcnonclifford} realized transversal non-Clifford gates in qLDPC codes defined on chain complexes with special geometric structures.
Ref.~\cite{breuckmann2024cupsgatesicohomology} showed that cohomology invariants yield diagonal logical gates and constructed logical multi-controlled-$Z$ gates for chain complexes with a cup-product structure.
Ref.~\cite{hsin2024classifyinglogicalgatesquantum} extended this formalism to higher-form gauge fields and demonstrated that higher cohomology invariants such as Steenrod squares and higher Pontryagin powers generate families of Clifford and non-Clifford logical gates on closed or open manifolds.

These developments highlight the versatility of the gauge-field viewpoint, but current constructions typically rely on geometric embeddings or specific product complexes and primarily address diagonal gates; a unified treatment of non-diagonal gates such as Hadamard is still missing.
Moreover, a systematic procedure to decompose logical gates into physical operations within this formalism has not been fully established.

In this work, we provide one way to construct a broad class of logical gates using this formalism.
Our construction does not require any special manifold or product structure for the underlying chain complex, and so can be applied for general CSS codes.
We should note here that our focus is not on fault tolerance such as constant depth or locality, but rather on establishing an algebraically transparent foundation upon which such considerations may later be incorporated.

Using the electric and magnetic gauge fields associated with the underlying chain complex, we derive explicit physical decompositions for the logical $S$, Hadamard, $T$, $CZ$, and multi-controlled-$Z$ gates. These logical gates admit compact polynomial expressions in the gauge fields, and we show that their logical operation depends only on the (co)homology classes of their associated logical qubit
This property ensures that our physical gate decompositions are well-defined at the logical level.
A summary of the constructed logical gates and their physical decompositions is presented in Table~\ref{tab:logical_gates_summary}.

The remainder of this paper is organized as follows.
\Cref{sec:Review} reviews QEC and the correspondence between CSS codes and chain complexes and introduces the gauge field formalism.
\Cref{sec:logical_gate_and_gauge_field_formalism} develops our method for constructing logical gates from physical ones and presents explicit examples including the $S$, Hadamard, $T$, and (multi-)controlled-$Z$ gates.
We also show that these logical gates depend only on (co)homology classes on the code space.
\Cref{sec:conclusion} is dedicated to the conclusion and outlook.

\renewcommand{\arraystretch}{1.6}
\begin{table}[htbp]
    \centering
    \begin{tabular}{|c|c|c|}
        \hline
        Logical Gate & Physical Gate Decomposition & Gauge Field Expression \\
        \hline
        $\overline{Z}(\gamma)$ & $\displaystyle \prod_k Z_{i_k}$ & $e^{i\pi a(\gamma)}$ \\
        $\overline{X}(\tilde{\gamma})$ & $\displaystyle \prod_k X_{\tilde{i}_k}$ & $e^{i \pi b(\tilde{\gamma})}$ \\
        $\overline{S}(\gamma)$ & $\displaystyle \prod_k S_{i_k} \prod_{k_1<k_2} CZ_{i_{k_1},i_{k_2}}$ & $e^{i \frac{\pi}{2}a(\gamma)^2}$ \\
        $\overline{H}(\gamma)$ & $\displaystyle e^{-\frac{i\pi}{4}}\overline{S}(\gamma)\cdot \prod_{k} H_{\tilde{i}_k}\cdot \overline{S}(\tilde{\gamma}) \cdot \prod_{k} H_{\tilde{i}_k} \cdot \overline{S}(\gamma)$& $e^{-i\frac{\pi}{4}} e^{i\frac{\pi}{2}a(\gamma)^2} e^{i\frac{\pi}{2}b(\tilde{\gamma})^2}e^{i\frac{\pi}{2}a(\gamma)^2}$ \\
        $\overline{CZ}(\gamma_1,\gamma_2)$ & $\displaystyle \prod_{k_1,k_2} CZ_{i^1_{k_1},i^2_{k_2}}$ & $e^{i \pi a(\gamma_1)a(\gamma_2)}$ \\
        $\overline{C^{m-1}Z}(\gamma_1,\ldots,\gamma_m)$ & $\displaystyle \prod_{k_1,\ldots,k_m} C^{m-1}Z_{i^1_{k_1},\ldots,i^m_{k_m}}$ & $e^{i \pi a(\gamma_1) \cdots a(\gamma_{m})}$\\
        $\overline{T}(\gamma)$ & $\displaystyle \prod_k T_{i_k} \prod_{k_1<k_2} CS_{i_{k_1},i_{k_2}}^\dagger \prod_{k_1<k_2<k_3} CCZ_{i_{k_1}i_{k_2},i_{k_3}}$ & $e^{i \frac{\pi}{4} (2 a(\gamma)^3 - 3 a(\gamma)^2 + 2 a(\gamma))}$\\
        \hline
    \end{tabular}
    \caption{
    Summary of logical gates constructed in this work, together with their physical-gate decompositions and gauge-field expressions.  
    The electric and magnetic gauge fields associated with Pauli $Z$ and $X$ operators are denoted by $a$ and $b$, respectively.  
    The symbols $\gamma$, $\tilde{\gamma}$, and $(\gamma_1,\ldots,\gamma_m)$ denote sets of physical qubits supporting the corresponding logical qubit, with indices $i_k$, $\tilde{i}_k$, and $i^j_{k_j}$ labeling the individual qubits.  
    Multi-controlled-$Z$ gates $C^{m-1}Z$ are defined such that repeated control or target qubits reduce to lower-control gates, and a control and target acting on the same qubit reduces to a single-qubit $Z$ operator (e.g., $C^3Z_{i,i,j,k}=C^2Z_{i,j,k}$ and $C^2Z_{i,i,i}=CZ_{i,i}=Z_i$).
    The controlled-$S$ gate $CS$ is defined similarly.
    }
    \label{tab:logical_gates_summary}
\end{table} 
\renewcommand{\arraystretch}{1.0}

\section{Preliminaries}
\label{sec:Review}

In this section, we briefly review the quantum error correction, focusing on the relation between the quantum CSS code and the chain complex.
After that, we review the gauge field formalism of the CSS code.

\subsection{Quantum error correction, CSS code and chain complex}

Quantum error correction (QEC) is a method to protect quantum information from noise by encoding logical qubits into a larger number of physical qubits.  
The code space (or logical space)  $\mathcal{C}$ is defined as a subspace of the full Hilbert space of $n$ qubits, $\mathcal{H} = (\mathbb{C}^2)^{\otimes n}$, while the complementary subspace corresponds to the logical states with errors.  
The first QEC code, the Shor code~\cite{Shor:1995hbe}, encodes one logical qubit into nine physical qubits and corrects any single-qubit error.

Among various frameworks for the construction of QEC codes, the \emph{stabilizer code}~\cite{Gottesman:1997zz} is a general and widely used class.
Here, the code space $\mathcal{C}$ is defined as the simultaneous $+1$ eigenspace of a set of mutually commuting Pauli operators, called stabilizers.
Let $\mathcal{P}_n$ denote the $n$-qubit Pauli group,
\begin{align}
    \mathcal{P}_n \coloneqq \{\pm1,\pm i\} \times \langle X_1,Z_1,\ldots,X_n,Z_n\rangle,
\end{align}
where $X_i \text{ and } Z_i\, (i=1,\ldots,n)$ act as the Pauli $X$ and $Z$ operators on the $i$-th qubit.  
A \emph{stabilizer group} $\mathcal{S}\subset\mathcal{P}_n$ is an Abelian subgroup that does not contain $-I$, and the code space is defined as
\begin{align}
    \mathcal{C} = \{\ket{\psi}\in\mathcal{H}\mid {}^\forall s\in\mathcal{S}, \, s\ket{\psi}=\ket{\psi}\}.
\end{align}
If the stabilizer group $\mathcal{S}$ has $m$ independent stabilizers, the dimension of the code space $\mathcal{C}$ is given by $2^{n-m}$, so the code encodes $k=n-m$ logical qubits.

In this framework, errors can be detected by measuring stabilizers: each outcome is called a \emph{syndrome}.  
For a state without any error, all syndromes are $+1$; if an error anticommutes with certain stabilizers, the corresponding syndromes turn to $-1$.  
Since the code space is the simultaneous $+1$ eigenspace of the stabilizers, any nontrivial syndrome indicates that an error has occurred.

The number of errors that can be corrected is characterized by the \emph{code distance} $d$.
Because the code space is isomorphic to the Hilbert space of $k$-qubits, there exist $k$ sets of operators $\{\overline{X}_i,\overline{Z}_i\}_{i=1}^k$ that act as Pauli $X$ and $Z$ operators on the code space.
These operators are called \emph{logical Pauli operators}.
The code distance $d$ is defined as the minimum weight (number of qubits on which it acts nontrivially) of any logical Pauli operator.
A code with the code distance $d$ can correct up to $\lfloor(d-1)/2\rfloor$ arbitrary single-qubit errors.

In general, we refer to any operator $\mathcal{O}$ that preserves the code space as a \emph{logical operator}:
\begin{align}
    {}^\forall\ket{\psi}\in\mathcal{C},\quad \mathcal{O}\ket{\psi}\in\mathcal{C}.
\end{align}
Logical operators beyond the Pauli group, such as $S$, Hadamard, CNOT, and $T$ gates, are necessary for universal quantum computation, but finding their physical implementations for a given stabilizer code is often nontrivial.

Among stabilizer codes, the \emph{Calderbank-Shor-Steane (CSS) codes}~\cite{Calderbank_1996,Steane:1996ghp} form a particularly important subclass.  
Their stabilizers are tensor products of only $X$ or only $Z$ operators: $Z$-type stabilizers detect bit-flip ($X$) errors, and $X$-type stabilizers detect phase-flip ($Z$) errors.  
Because of this separation, a CSS code can be described using two classical binary linear codes, and its logical $X$ and $Z$ operators are simple products of physical $X$ and $Z$, making analysis and implementation easier.

For a CSS code with $n_X$ $X$-type stabilizers $\{S_i^X\}_{i=1}^{n_X}$ and $n_Z$ $Z$-type stabilizers $\{S_i^Z\}_{i=1}^{n_Z}$, we can introduce binary matrices $H_X$ and $H_Z$ of size $n_X\times n$ and $n_Z\times n$, respectively:
\begin{align}
    S_i^X \eqqcolon \prod_{j=1}^{n} (X_j)^{(H_X)_{ij}}, \qquad
    S_i^Z \eqqcolon \prod_{j=1}^{n} (Z_j)^{(H_Z)_{ij}}.
\end{align}
From the commutativity of stabilizers, these matrices satisfy the following CSS condition:
\begin{align}
    H_X H_Z^{T} = 0.
\end{align}
These parity check matrices $H_X,H_Z$ completely specify the structure of the code.

CSS codes can be elegantly described using \emph{chain complexes}, establishing the so-called CSS-homology correspondence~\cite{Kitaev1997quantum,Bombin:2006cd}.  
A chain complex is a sequence of vector spaces connected by linear maps whose consecutive compositions vanish.  
We can define a chain complex $C_{css}$ that corresponds to a CSS code as
\begin{align}
    C_{css} \colon C_2 \xrightarrow{\partial_2 = H_Z^{T}} C_1 \xrightarrow{\partial_1 = H_X} C_0,
\end{align}
where $C_0=\mathbb{F}_2^{n_X}, C_1=\mathbb{F}_2^n \text{ and } C_2=\mathbb{F}_2^{n_Z}$ represent the syndromes of $X$-type stabilizers, (errors on) physical qubits, the syndromes of $Z$-type stabilizers, respectively.
$\mathbb{F}_2$ denotes the finite field with two elements.
$\partial_1$ and $\partial_2$ are called \emph{boundary operators}, represented by the parity-check matrices $H_X$ and $H_Z^{T}$, respectively.
The condition $\partial_1\partial_2=0$ is equivalent to the CSS condition.

Elements of $C_0$, $C_1$, and $C_2$ are called 0-, 1-, and 2-chains, respectively.
In general, these algebraic objects do not admit a geometric interpretation.
However, when a chain complex arises from a cell decomposition of a manifold, it can be viewed geometrically as vertices, edges, and faces.
Such a geometric intuition is especially useful for topological codes.
For example, the chain complex of a two-dimensional torus assigns vertices, edges, and faces to $C_0$, $C_1$, and $C_2$, respectively, giving the toric code~\cite{Kitaev:1997wr}.

Motivated by this terminology, we refer to elements of $C_0$, $C_1$, and $C_2$ as vertices, edges, and faces throughout this paper.
In this language, stabilizers take the form
\begin{align}
    \label{eq:stabilizers_in_chain_complex}
    S^Z(f) &= \prod_{e_i \in \partial f} Z_i, \qquad
    S^X(v) = \prod_{v \in \partial e_i} X_i,
\end{align}
where $f \in C_2$ is a face and $v \in C_0$ a vertex.
Thus, a $Z$-type stabilizer acts as $Z$ operators of all edges in the boundary of a face, while an $X$-type stabilizer acts as $X$ operators of all edges incident to a vertex~\footnote{
As for the correspondence with lattice gauge theory terminology, the stabilizer structure considered here is equivalent to a $\mathbb{Z}_2$ lattice gauge theory defined on a two-dimensional hypergraph.  
The Hamiltonian takes the same form as that of the toric code~\cite{Kitaev:1997wr}:
\begin{align}
    H 
    = - \sum_{f \in C_2} S^Z(f) - \sum_{v \in C_0} S^X(v)
    = - \sum_{f \in C_2} \prod_{e \in \partial f} Z(e) - \sum_{v \in C_0} \prod_{\partial e \in v} X(e),
\end{align}
where $Z(e)$ and $X(e)$ are the Pauli $Z$ and $X$ operators acting on the edge (qubit) $e$.
The $Z$-type and $X$-type stabilizers correspond to the plaquette and star terms, respectively.  
Unlike the usual lattice on a manifold, the underlying structure here can be a general hypergraph; the incidences between edges and faces/vertices are encoded by the boundary operators $\partial_1 = H_X$ and $\partial_2 = H_Z^{T}$.  
With this identification, the code space coincides with the ground-state subspace of the above Hamiltonian.
}.

Although this geometric picture holds for manifold-based complexes, generic CSS codes are not required to admit any geometric embedding.
Their chain complexes may contain faces having only two edges or edges incident to more than two vertices, and therefore naturally live on hypergraphs rather than ordinary cell complexes.

The chain-complex viewpoint offers several advantages.  
One key benefit is that it allows properties of CSS codes to be written compactly and analyzed using algebraic-topological tools.  
For example, the logical $Z$ and $X$ operators are classified by elements of the first homology and cohomology groups:
\begin{subequations}
    \label{eq:definition_of_homology_and_cohomology_groups}
    \begin{align}
        H_1(C_\mathrm{css}) &= \ker\partial_1 / \operatorname{Im}\partial_2 = \ker H_X / \operatorname{Im} H_Z^{T},\\
        H^{1}(C_\mathrm{css}) &= \ker\partial_2^{T} / \operatorname{Im}\partial_1^{T} = \ker H_Z / \operatorname{Im} H_X^{T}.
    \end{align}
\end{subequations}
The code distance $d$ is determined by the nontrivial smallest-weight elements of these groups, known as the algebraic topology notion of (co)systoles.
We reproduce this fact in the gauge field formalism in the next subsection.

Furthermore, chain-complex methods provide a systematic route to design new families of CSS codes.  
Tensor-product complexes produce hypergraph product codes~\cite{Tillich_2014}, while lifted-product constructions~\cite{Panteleev:2019upy,Panteleev:2021wvc} and fiber-bundle constructions~\cite{Hastings:2020jbo} can be utilized to generate quantum LDPC codes with favorable distance and scaling.
Since these codes are defined by algebraic ways, their logical operators and distance can be analyzed using homological algebra.

Finally, this correspondence between CSS codes and chain complexes provides the foundation for applying the \emph{gauge field formalism} to analyze logical gates, which will be introduced in the next subsection.

\subsection{Gauge field formalism}

In this subsection, we explain the gauge field formalism for the quantum CSS codes.  
This formalism has recently attracted attention to provide a unified way to describe physical and logical gates as operator-valued cochains defined on the underlying chain complex.

We first introduce the \emph{electric gauge field} $a$, defined as a linear map from $C_1$ to $u(2^n)$ (the Lie algebra of the $2^n$-dimensional unitary group $U(2^n)$)
\footnote{
    Briefly speaking, $u(2^n)$ denotes the real vector space of all $2^n \!\times 2^n$ anti-Hermitian matrices, equipped with the Lie bracket as its algebraic operation.  
    For $u(2^n)$, this Lie bracket is simply the commutator, $[A,B] = AB - BA$, which endows $u(2^n)$ with the structure of a Lie algebra.
}.
Let us denote the basis of $C_1$ as $\{e_j\}_{j=1}^n$, where each $e_j$ corresponds to the $j$-th qubit.
The value of $a$ on the basis element $e_j$ is defined as
\begin{align}
    a(e_j) = 
    \overbrace{I_2 \otimes \cdots \otimes I_2}^{j-1}
    \otimes z \otimes
    \overbrace{I_2 \otimes \cdots \otimes I_2}^{n-j}
    \eqqcolon z_j,
\end{align}
where $z \coloneqq \mqty(0 & 0\\0 & 1)$.  
Similarly, we define the \emph{magnetic gauge field} $b$ as
\begin{align}
    b(e_j) =
    \overbrace{I_2 \otimes \cdots \otimes I_2}^{j-1}
    \otimes x \otimes
    \overbrace{I_2 \otimes \cdots \otimes I_2}^{n-j}
    \eqqcolon x_j,
\end{align}
where $x \coloneqq \mqty(1/2 & -1/2 \\ -1/2 & 1/2)$.  
The relations between the Pauli matrices and these fields are given by
\begin{align}
    \label{eq:relation_between_Pauli_and_gauge_field}
    Z = \exp(i\pi z) = I - 2z, 
    \qquad
    X = \exp(i\pi x) = I - 2x.
\end{align}
Because these gauge fields $a$ and $b$ can be regarded as linear maps from 1-chains (qubits) to operators (matrices), they are referred to as \emph{operator-valued cochains}, which are called \emph{gauge fields} in quantum field theory.
This is the origin of the term \emph{gauge field formalism}.

Using these gauge fields, we can define the unitary operators $\overline{Z}(\gamma)$ and $\overline{X}(\gamma)$ as
\footnote{
In quantum field theory, this operator corresponds to a Wilson loop in high-energy physics or a Berry phase in condensed-matter physics, typically written as
\begin{align}
    \overline{Z}(\gamma) \leftrightarrow \exp(i \oint_\gamma dx\, a(x)),
\end{align}
where $a(x)$ is the $U(1)$ gauge field.
}
\begin{subequations}
    \label{eq:definition_of_trasversal_Pauli_operators}
    \begin{align}
        \label{eq:definition_of_barZ}
        \overline{Z}(\gamma) 
        &= \exp(i\pi a(\gamma))
         = \exp(i\pi \sum_i \gamma^i a(e_i))
         = \prod_{i=1}^n (Z_i)^{\gamma^i},\\
        \label{eq:definition_of_barX}
        \overline{X}(\gamma)
        &= \exp(i\pi b(\gamma))
         = \exp(i\pi \sum_i \gamma^i b(e_i))
         = \prod_{i=1}^n (X_i)^{\gamma^i},
    \end{align}
\end{subequations}
where $\gamma = \sum_{i=1}^n \gamma^i e_i$ is a 1-chain with $\gamma^i \in \{0,1\}$.
It is easily seen that these operators satisfy $\overline{Z}(\gamma_1) \overline{Z}(\gamma_2) = \overline{Z}(\gamma_1 + \gamma_2)$.
The commutation relations between the gauge fields $a,b$ with a 1-chain $\gamma_1$ and these Pauli operators $\overline{X}, \overline{Z}$ with a 1-chain $\gamma_2$ are given by
\begin{subequations}
    \begin{align}
        \overline{X}(\gamma_2) a(\gamma_1) \overline{X}(\gamma_2) &= a(\gamma_1) - 2 a(\gamma_1 \cap \gamma_2) + \inp{\gamma_1}{\gamma_2},\\
        \overline{Z}(\gamma_2) b(\gamma_1) \overline{Z}(\gamma_2) &= b(\gamma_1) - 2 b(\gamma_1 \cap \gamma_2) + \inp{\gamma_1}{\gamma_2},
    \end{align}
\end{subequations}
where $\inp{\gamma_1}{\gamma_2} = \sum_i \gamma_1^i \gamma_2^i$ is the inner product of $\gamma_1$ and $\gamma_2$, and $\gamma_1 \cap \gamma_2 = \sum_i \gamma_1^i \gamma_2^i e_i$ denotes their intersection.
The inner product $\inp{\gamma_1}{\gamma_2}$ takes integer values and should be interpreted as $\inp{\gamma_1}{\gamma_2}I_{2^n}$ when added to $a$ or $b$, which takes values in the Lie algebra $u(2^n)$.

Now, let us connect this formalism with quantum error correction, in particular the quantum CSS code.  
Using the gauge fields, stabilizers \Cref{eq:stabilizers_in_chain_complex} can be expressed as
\begin{align}
    S^Z(f) = \overline{Z}(\del_2 f), \qquad S^X(v) = \overline{X}(\del_1^T v),
\end{align}
where $\del_1^T$ is the transpose of $\del_1$, with $\sum_{v \in \del e_i} e_i = \del_1^T v$.

The logical $Z$ and $X$ operators can be expressed compactly in terms of the gauge fields, too.
As we mentioned in \Cref{eq:definition_of_homology_and_cohomology_groups}, these operators are classified by the first homology and cohomology groups.
Let us derive this fact using the gauge field formalism while seeing how this formalism works.

To identify when $\overline{Z}(\gamma)$ or $\overline{X}(\gamma)$ acts as a logical operator, we examine their commutation relations with the stabilizers.
Focusing on $\overline{Z}(\gamma)$, the commutation with $S^Z(f)$ is trivially satisfied.
For $S^X(v)$, we find
\footnote{
    We can introduce a linear map $\tilde{v}: C_0 \to \mathbb{Z}$ defined by $\tilde{v}(v') = \inp{v}{v'}$ for each vertex $v$, which is interpreted as a 0-cochain.
    Defining the exterior derivative $d\tilde{v}(\gamma)$ as $d \tilde{v}(\gamma) \coloneqq \inp{v}{\del_1 \gamma}$, we can rewrite this relation as
    \begin{align}
        S^X(v)\,\overline{Z}(\gamma)\,S^X(v)
        = \exp(i\pi (a(\gamma) + d \tilde{v}(\gamma))).
    \end{align}
    This relation shows that conjugation with $S^X(v)$ induces a gauge transformation of the electric field:
    \begin{align}
        \label{eq:gauge_transformation_of_a}
        a \to a + d\tilde{v}.
    \end{align}
    The commutation relation then implies that $\overline{Z}(\gamma)$ becomes a logical operator if and only if it is gauge invariant.

    This calculation holds for the exponential of an integer function $f(a)$ of the gauge field, i.e., $S^{X}(v) \exp(i\pi f(a))S^{X}(v) = \exp(i\pi f(a + d\tilde{v}))$, as pointed out in Ref.~\cite{hsin2024classifyinglogicalgatesquantum}.
    However, it is no longer valid for more general functions, and the general transformation is given by $S^{X}(v) g(a(\gamma)) S^{X}(v) = g(a(\gamma)-2a(\gamma \cap \del_1^T v) + d\tilde{v}(\gamma))$ from the commutation relation between $a$ and $S^X(v)$.
}
\begin{align}
    S^X(v)\,\overline{Z}(\gamma)\,S^X(v)
    &= \exp(i\pi\,\overline{X}(\del_1^T v)\,a(\gamma)\,\overline{X}(\del_1^T v))
     = \exp(i\pi (a(\gamma) + \inp{\gamma}{\del_1^T v})).
\end{align}
Here, we used $\exp(-2i\pi a(\gamma)) = 1$ because $a(\gamma)$ is a diagonal matrix with integer eigenvalues on the computational basis.
For $\overline{Z}(\gamma)$ to commute with all $S^X(v)$, it must hold that 
\begin{align}
    \label{eq:condition_for_logical_Z_operator}
    {}^\forall v \in C_0, \quad
    0 = \inp{\gamma}{\del_1^T v} = \inp{\del_1 \gamma}{v} \quad (\mathrm{mod}\,2)
    \quad \Leftrightarrow \quad
    \del_1 \gamma = 0
    \quad \Leftrightarrow \quad
    \gamma \in \ker \del_1.
\end{align}
Using a similar argument, the logical $X$ operators are characterized by $\overline{X}(\tilde{\gamma})$ with $\tilde{\gamma} \in \ker \del_2^T$.

Next, we have to consider how many non-trivial logical $Z$ operators exist on the code space, characterizing the type of the logical qubits.
This is because $\overline{Z}(\gamma)$ may act trivially on the code space even if $\gamma \in \ker \del_1$.
From \Cref{thm:IdentityCondition} in \Cref{sec:IdentityCondition}, we know that $\overline{Z}(\gamma)$ acts as the identity operator on the code space if it commutes with all logical $X$ operators.
Their commutation with the logical $X$ operators with $\tilde{\gamma} \in \ker \del_2^T$ is given by
\begin{align}
    \label{eq:commutation_relation_between_barZ_and_barX}
    \overline{X}(\tilde{\gamma})\,\overline{Z}(\gamma)\,\overline{X}(\tilde{\gamma})
    = \exp(i\pi \inp{\gamma}{\tilde{\gamma}})\,\overline{Z}(\gamma).
\end{align}
Thus, $\overline{Z}(\gamma)$ and $\overline{X}(\tilde{\gamma})$ commute if and only if
$\inp{\gamma}{\tilde{\gamma}} = 0$.
 Decompose $\gamma$ and $\tilde{\gamma}$ with some face $f \in C_2$ and vertex $v \in C_0$ as
\begin{align}
    \gamma = h(\gamma) + \del_2 f, 
    \qquad
    \tilde{\gamma} = ch(\tilde{\gamma}) + \del_1^T v,
\end{align}
where $h(\gamma) \in H_1(C_\mathrm{css})$ and $ch(\tilde{\gamma}) \in H^1(C_\mathrm{css})$ denote the representatives of the homology and cohomology classes of $\gamma$ and $\tilde{\gamma}$, respectively.
Then, we have 
\begin{align}
    \inp{\gamma}{\tilde{\gamma}} = \inp{h(\gamma)}{ch(\tilde{\gamma})}
\end{align}
since other terms vanish due to $\del_1 h(\gamma) = \del_2^T ch(\tilde{\gamma}) = \del_1\del_2=0$.
By the Poincar\'{e} duality, there exists $\gamma' \in C_1$ with $\inp{h(\gamma)}{ch(\gamma')}=1$ if $h(\gamma)$ is non-trivial.
Hence, the only way for the logical $Z$ operator $\overline{Z}(\gamma)$ to commute with all logical $X$ operators $\overline{X}(\tilde{\gamma})$ is $h(\gamma)=0$.
Therefore, we find that any logical $Z$ operators with $\gamma = \del_2 f \in \Im \del_2$ act trivially on the code space.
From the point of view of the quantum error correction, this condition means that $\overline{Z}(\del_2 f)$ can be expressed as a product of the $Z$-type stabilizers.

Having obtained these results, we can classify the logical $Z$ operators.
From \Cref{eq:condition_for_logical_Z_operator}, we know that $\overline{Z}(\gamma)$ is a logical operator if $\gamma \in \ker \del_1$.
Furthermore, from the above discussion, two logical $Z$ operators $\overline{Z}(\gamma_1)$ and $\overline{Z}(\gamma_2)$ represent the same logical operation on the code space if and only if
$\gamma_1 - \gamma_2 \in \Im\,\del_2$, since $\overline{Z}(\gamma_1) = \overline{Z}(\gamma_1-\gamma_2)\,\overline{Z}(\gamma_2)$.
Consequently, logical $Z$ operators are classified by $\ker \del_1 / \Im \,\del_2$, which is in agreement with the first homology group $H_1(C_\mathrm{css})$.
Analogously, the logical $X$ operators are classified by the first cohomology group $H^1(C_\mathrm{css})$.

So far, we have seen how the gauge field formalism describes the logical Pauli operators and relates them to the homology and cohomology groups.
Recently, this formalism has been extended to analyze logical gates beyond Pauli operators~\cite{Zhu:2023xfg,Lin:2024uhb,Golowich:2024ogv,breuckmann2024cupsgatesicohomology,hsin2024classifyinglogicalgatesquantum,zhu2025topologicaltheoryqldpcnonclifford,Kobayashi:2025cfh}.
However, despite these advances, a systematic and practical method for decomposing logical gates into physical gates using this formalism has not been fully established.
In addition, their logical gates are mainly limited to certain specific quantum codes or specific families of chain complexes with particular structures, such as cup products.
In the next section, we further develop this formalism to systematically construct and decompose logical gates with physical gates, which is applicable to general CSS codes.

\section{Logical Gate and Gauge Field Formalism}
\label{sec:logical_gate_and_gauge_field_formalism}

In this section, we study logical gates within the gauge field formalism.
Our goal is to clarify how logical operations can be systematically constructed and decomposed using gauge fields.
Here, we \emph{do not} assume any special structure of the underlying chain complex, such as cup products or manifold structure; rather, we consider general CSS codes defined on chain complexes on hypergraphs, which may not have a geometric embedding.
In exchange for this applicability, our construction does not care about fault-tolerance of logical gates, or efficiency of their decompositions, such as constant-depth or optimal number of physical gates.
Our aim is to provide a general framework for constructing and decomposing logical gates in terms of physical gates using the gauge field formalism.
We leave incorporating fault-tolerance and efficiency into this framework as an important direction for future research.

Our construction is based on three key ideas.
The first idea is that any function $g(\overline{X}, \overline{Z})$ of the logical $X$ and $Z$ operators represents a valid logical operator.  
This follows from the fact that the logical $X$ and $Z$ operators commute with all stabilizers, and thus so as any function of them:
\begin{align}
    \comm{g(\overline{X}, \overline{Z})}{S^Z(f)} = \comm{g(\overline{X}, \overline{Z})}{S^X(v)} = 0,
\end{align}
for any $f \in C_2$ and $v \in C_0$.

The second idea is that if a physical gate $\mathcal{O}$ can be expressed in terms of the physical $X$ and $Z$ operators as $\mathcal{O} = g(X, Z)$, its logical counterpart $\overline{\mathcal{O}}$ can be obtained simply by replacing the physical operators with the logical ones:
\begin{align}
\label{eq:logical_op_by_replacement}
    \overline{\mathcal{O}} = g(\overline{X}, \overline{Z}).   
\end{align}
This prescription works because the commutation relations between $X_i$ and $Z_j$, given by
$X_i Z_j = (-1)^{\delta_{ij}} Z_j X_i$ for any $i, j$, also hold for the logical operators $\overline{X}_i$ and $\overline{Z}_i$.  
Hence, any algebraic relation that defines a physical gate is automatically preserved for the corresponding logical gate.

The third idea is that if an operator $\mathcal{O}$ has a decomposition with the Pauli $X$ and $Z $ operators, such as the Euler-angle decomposition, its logical version can be decomposed into physical gates using the relations
\begin{align}
    (X)^c = I - 2c x, \qquad (Z)^c = I - 2c z,
\end{align}
where $c \in \{0,1\}$.  
Although this approach is somewhat abstract and heuristic, we will make its meaning concrete in the following subsections.  
We will see that the gauge field formalism provides a simple and transparent way to compute commutation relations among operators, thereby clarifying the structure of logical gates.

As a first and illustrative example, we examine how to construct and decompose the logical $S$ gate within this formalism.  
Subsequently, we discuss the rest set of the Clifford gates, the logical $H$ and controlled-$Z$ gates.
Then, as an application of the formalism beyond the Clifford gates, we explore the decomposition of the logical multi-controlled-$Z$ and $T$ gates.  
Finally, we give some comments on our methods and results.

\subsection{S gate}

Let us consider the $S$ gate, defined as 
\begin{align}
    \label{eq:physical_S_gate}
    S \coloneqq \mqty(1 & 0 \\ 0 & i) = \exp(i\frac{\pi}{2}\frac{I-Z}{2}).
\end{align}
According to the replacement rule in \Cref{eq:logical_op_by_replacement}, the logical $S$ gate for a single logical qubit $\gamma \in H_1(C)$ is given by
\begin{align}
    \label{eq:logical_S_gate}
    \overline{S}(\gamma) = \exp(i\frac{\pi}{2}\frac{I-\overline{Z}(\gamma)}{2}).
\end{align}

We first verify that this operator indeed acts as the $S$ gate on the code space.
The logical $S$ gate should satisfy the same commutation relations with the logical Pauli operators as the physical $S$ gate.
For the physical $S$ gate in \Cref{eq:physical_S_gate}, the commutation relation with the Pauli $X$ operator is given by
\begin{align}
    X S X = i Z S.
    \label{eq:commutation_relation_of_S_gate_with_X}
\end{align}
Let us check that the logical $S$ gate in \Cref{eq:logical_S_gate} satisfies the same relation with the logical $X$ operator.
Let $\gamma \in H_1(C)$ and $\tilde{\gamma} \in H^1(C)$ be a dual representative satisfying $\inp{\gamma}{\tilde{\gamma}} = 1$.  
The commutation relation between $\overline{S}(\gamma)$ and $\overline{X}(\tilde{\gamma})$ is computed as
\begin{subequations}
    \label{eq:commutation_relation_of_logical_S_gate_with_logical_X}
    \begin{align}
        \overline{X}(\tilde{\gamma})\, \overline{S}(\gamma)\, \overline{X}(\tilde{\gamma})
        &= \exp\!\left(i\frac{\pi}{2}\frac{I-\overline{X}(\tilde{\gamma})\overline{Z}(\gamma)\overline{X}(\tilde{\gamma})}{2}\right)\\
        &= \exp\!\left(i\frac{\pi}{2}\frac{I - (-\overline{Z}(\gamma))}{2}\right)\\
        &= i\,\overline{Z}(\gamma)\,\overline{S}(\gamma).
    \end{align}
\end{subequations}
Here, we used \Cref{eq:commutation_relation_between_barZ_and_barX} and the fact that $\exp(i\pi \overline{Z}(\gamma)/2) = i\overline{Z}(\gamma)$ since $\overline{Z}(\gamma)^2 = I$.  
Thus, $\overline{S}(\gamma)$ behaves as the $S$ gate on the code space.
If we take a different logical qubit $\tilde{\gamma}'$ satisfying $\inp{\gamma}{\tilde{\gamma}'} = 0$, the same calculation shows that $\overline{S}(\gamma)$ commutes with $\overline{X}(\tilde{\gamma}')$, as expected.

Next, we confirm that $\overline{S}(\gamma)$ commutes with all stabilizers.  
Although this follows directly from its construction, it is instructive to check explicitly.  
Since commutativity with $S^Z(f)$ is obvious, we consider that with $S^X(v)$ for an arbitrary vertex $v$:
\begin{align}
    S^X(v)\, \overline{S}(\gamma)\, S^X(v)
    &= \exp\!\left(i\frac{\pi}{2}\frac{I - \overline{X}(\del_1^T v)\,\overline{Z}(\gamma)\,\overline{X}(\del_1^T v)}{2}\right)\\
    &= \exp\!\left(i\frac{\pi}{2}\frac{I - \exp(i\pi \inp{\gamma}{\del_1^T v})\,\overline{Z}(\gamma)}{2}\right)\\
    &= \overline{S}(\gamma).
\end{align}
Here, we used \Cref{eq:commutation_relation_between_barZ_and_barX} and the fact that $\exp(i\pi \inp{\gamma}{\del_1^T v}) = \exp(i\pi \inp{\del_1 \gamma}{v}) = 1$ for $\gamma \in H_1(C)$.  
Thus, $\overline{S}(\gamma)$ commutes with all stabilizers and therefore qualifies as a logical operator.  
The structure of this calculation parallels that for logical operators—the only difference is replacing $\tilde{\gamma}$ by $\del_1^T v$.

Now, let us analyze how this logical $S$ gate can be decomposed into physical gates.  
For $\gamma = \sum_i \gamma^i e_i$, we can expand $\overline{Z}(\gamma)$ using \Cref{eq:relation_between_Pauli_and_gauge_field}:
\begin{align}
    \overline{Z}(\gamma)
    = \prod_i Z_i^{\gamma^i}
    = \prod_i (I - 2\gamma^i z_i)
    = I - 2\sum_i \gamma^i z_i + 4\sum_{i<j} \gamma^i \gamma^j z_i z_j -8 \Delta,
\end{align}
where $\Delta$ is defined as
\begin{align}
    \Delta \coloneqq
    \sum_{i_1<i_2<i_3}\prod_{k=1}^3 \gamma^{i_k} z_{i_k}
    - 2\sum_{i_1<i_2<i_3<i_4}\prod_{k=1}^4 \gamma^{i_k} z_{i_k}
    + \cdots
    + (-2)^{n-3}\!\!\sum_{i_1<\cdots<i_n}\prod_{k=1}^n \gamma^{i_k} z_{i_k}.
\end{align}
Hence, the exponent in \Cref{eq:logical_S_gate} can be written as
\begin{align}
    \frac{I-\overline{Z}(\gamma)}{2}
    = \sum_i \gamma^i z_i - 2\sum_{i<j} \gamma^i \gamma^j z_i z_j + 4\Delta.
\end{align}
The logical $S$ gate can thus be expressed as
\begin{align}
    \label{eq:logical_S_gate_decomposition}
    \overline{S}(\gamma)
    &= \exp(i\frac{\pi}{2}\left(\sum_i \gamma^i z_i - 2\sum_{i<j}\gamma^i \gamma^j z_i z_j + 4\Delta\right))\\
    &= \prod_i \exp(i\frac{\pi}{2}\gamma^i z_i)\,
       \prod_{i<j}\exp(-i\pi\gamma^i \gamma^j z_i z_j)\,
       \exp(2\pi i\Delta).
\end{align}
All terms in the exponential commute because each $z_i$ is diagonal in the computational basis.  
The first product corresponds to the physical $S$ gates, while the second corresponds to the physical $CZ$ gates (see \Cref{eq:physical_cz_gate}).  
The final term vanishes because $\Delta$ is a diagonal matrix with integer eigenvalues, implying $\exp(2i\pi\Delta)=I$.
Therefore, we find that the logical $S$ gate can be decomposed into the physical $S$ and $CZ$ gates as
\begin{align}
    \overline{S}(\gamma)
    = \prod_i (S_i)^{\gamma^i}\,
      \prod_{i<j} (CZ_{ij})^{\gamma^i \gamma^j}.
\end{align}
Let us rewrite this expression more explicitly with respect to the physical qubits.
Define the support of $\gamma$ as $\supp(\gamma) \coloneqq \{ j \in \{1,\ldots, n\} \mid \gamma^j = 1 \}$, and here we denote it as $\supp(\gamma) \eqqcolon \{j_1,\ldots,j_m\}$ with $j_k<j_{k+1}$.
Then, the above expression can be rewritten as
\begin{align}
    \label{eq:logical_S_gate_by_physical_gates}
    \overline{S}(\gamma)
    = \prod_{k=1}^m S_{j_k}
      \prod_{1\le k_1<k_2\le m} CZ_{j_{k_1}j_{k_2}}.
\end{align}
Hence, we find that the logical $S$ gate is realized by the transversal $S$ gates on the support of $\gamma$ and the $CZ$ gates between every pair within that support.

We can express this result compactly using the gauge fields.  
Using the formulas \Cref{eq:useful_formulae_a_gamma} in \Cref{app:useful_formulae}, we can rewrite \Cref{eq:logical_S_gate_decomposition} as~\footnote{
    Based on the correspondence with the $U(1)$ gauge theory, the counterpart to this logical $S$ gate may be written as
    \begin{align}
        \overline{S}(\gamma) \leftrightarrow \exp(\frac{i}{2} \biggl(\oint_\gamma dx\, a(x)\biggr)^2),
    \end{align}
    which looks somewhat strange since it involves the square of the loop integral of the gauge field.
    Note that this is \emph{not} equivalent to the exponential of the square of the gauge field itself, i.e., $\exp(i/2 \oint_\gamma dx\, a(x)^2)$.

}
\begin{align}
    \overline{S}(\gamma)
    = \exp(i\frac{\pi}{2}a(\gamma))
      \exp(i\frac{\pi}{2}(a(\gamma)^2 - a(\gamma)))
    = \exp(i\frac{\pi}{2} a(\gamma)^2).
\end{align}
This expression is useful when we calculate the commutation relations with other operators or show its (co)homology dependence on the code space.

Finally, let us examine how the operation of $\overline{S}(\gamma)$ acts on the code space.
As we reviewed in \Cref{sec:Review}, the logical $Z$ operators with $\gamma_1$ and $\gamma_2$ of the same homology class are logically equivalent.
This property is critical for the physical gates to act as a logical gate, since these 1-chains represent different physical qubits but the \emph{same} logical qubit.
Using the gauge field expression, we can explicitly see that this property holds for the logical $S$ gate, which shows that the logical operation of this gate depends only on the homology class of its argument.
We illustrate this in the following.

Let $\gamma_i \in \ker \del_1$ for $i=1,2$ and $f \in C_2$ with $\gamma_1 = \gamma_2 + \del_2 f$, which means that $\gamma_1$ and $\gamma_2$ belong to the same homology class.
Substituting this into the gauge field expression yields
\begin{align}
    \overline{S}(\gamma_1)
    = \overline{S}(\gamma_2)\, \overline{S}(\del_2 f)\,
      \exp(i\pi\, a(\gamma_2)\, a(\del_2 f)).
\end{align}
The cross term $\exp(i\pi a(\gamma_2)a(\del_2 f))$ behaves trivially on the code space from \Cref{thm:ControlledLogicallyIdentityOperator} in \Cref{app:cohomology_dependence_of_logical_gates}.
Intuitively, this is because it is a product of a controlled version of the stabilizer $S^Z(f)$.
We can show that $\overline{S}(\del_2 f)$ is also logically trivial.
This is because this term can be evaluated by \Cref{eq:logical_S_gate} as
\begin{align}
    \overline{S}(\del_2 f)
    = \exp(i\frac{\pi}{2}\frac{I - S^Z(f)}{2}),
\end{align}
and $S^Z(f)$ acts as the identity on the code space.
Alternatively, we can check that $\overline{S}(\del_2 f)$ commutes with all stabilizers and logical operators, which is given in \Cref{app:cohomology_dependence_of_logical_gates}.
Thus, we conclude that the logical $S$ gates with $\gamma_1$ and $\gamma_2$ are logically equivalent:
\begin{align}
    \overline{S}(\gamma_1) \sim \overline{S}(\gamma_2).
\end{align}
Here, we define the logical equivalence $\sim$ as equality on the code space, as is defined in \Cref{app:cohomology_dependence_of_logical_gates} in detail.
This means that the logical operation of $\overline{S}(\gamma)$ depends only on the homology class of $\gamma$:
\begin{align}
    \overline{S}(\gamma) \sim \overline{S}(h(\gamma)),
\end{align}
where $h(\gamma) \in H_1(C)$ is the homology class representative of $\gamma$.
This property ensures that our construction is a valid logical $S$ gate.

\subsection{Hadamard gate}

Let us now examine the Hadamard gate $H$ in a manner analogous to the $S$ gate.
This gate is defined as
\begin{align}
    H \coloneqq \frac{1}{\sqrt{2}}\mqty(1 & 1 \\ 1 & -1).
\end{align}
The Euler-angle decomposition of this gate is given by
\begin{align}
    H &= \exp(i\frac{\pi}{2}) \exp(-i\frac{\pi}{4} Z) \exp(-i\frac{\pi}{4} X) \exp(-i\frac{\pi}{4} Z) \\
      &= \exp(-i\frac{\pi}{4}) 
      \exp(i\frac{\pi}{2}\frac{I-Z}{2}) 
      \exp(i\frac{\pi}{2}\frac{I-X}{2}) 
      \exp(i\frac{\pi}{2}\frac{I-Z}{2}).
\end{align}
Following the same prescription as before, the logical Hadamard gate acting on a single logical qubit $\gamma \in H_1(C_\mathrm{css}    )$ can be expressed as
\begin{align}
    \overline{H}(\gamma) 
    &= \exp(-i\frac{\pi}{4}) 
       \exp(i\frac{\pi}{2}\frac{I-\overline{Z}(\gamma)}{2})
       \exp(i\frac{\pi}{2}\frac{I-\overline{X}(\tilde{\gamma})}{2})
       \exp(i\frac{\pi}{2}\frac{I-\overline{Z}(\gamma)}{2}),
\end{align}
where $\tilde{\gamma}$ denotes a 1-chain whose cohomology class is dual to $\gamma$, satisfying $\inp{\gamma}{\tilde{\gamma}} = 1$.
We can verify that this operator behaves as a Hadamard gate on the code space by checking its commutation relations with the logical $X$ and $Z$ operators, similar to the case of the $S$ gate.

To obtain a decomposition into physical gates, let us expand each exponential term.  
The second and last terms are already known from the logical $S$-gate construction.  
For the third term, we evaluate it in the same manner as $\overline{Z}(\gamma)$:
\begin{align}
    \exp(i\frac{\pi}{2}\frac{I-\overline{X}(\tilde{\gamma})}{2})
    &= \exp(i\frac{\pi}{2}
    \biggl(\sum_{i=1}^n \tilde{\gamma}^i x_i 
    + 2\sum_{i<j} \tilde{\gamma}^i \tilde{\gamma}^j x_i x_j
    \biggr)
    ).
    \label{eq:decomposition_of_logical_X_gate_part}
\end{align}
Using the single-qubit identity $x = H z H$, this expression can be rewritten as
\begin{align}
    \exp(i\frac{\pi}{2}\frac{I-\overline{X}(\tilde{\gamma})}{2})
    &= H(\tilde{\gamma}) 
       \exp(i\frac{\pi}{2}
       \biggl(\sum_{i=1}^n \tilde{\gamma}^i z_i 
       + 2\sum_{i<j} \tilde{\gamma}^i \tilde{\gamma}^j z_i z_j\biggr)
       )
       H(\tilde{\gamma}) \\
    &= H(\tilde{\gamma})\,\overline{S}(\tilde{\gamma})\,H(\tilde{\gamma}),
\end{align}
where we have defined the transversal Hadamard gates on the support of $\tilde{\gamma}$ as
\begin{align}
    H(\tilde{\gamma}) \coloneqq \prod_{i=1}^n (H_i)^{\tilde{\gamma}^i}.
\end{align}
Substituting these results, the logical Hadamard gate can be decomposed as
\begin{align}
    \overline{H}(\gamma)
    = e^{-i\frac{\pi}{4}}\, 
      \overline{S}(\gamma)\, 
      H(\tilde{\gamma})\, 
      \overline{S}(\tilde{\gamma})\, 
      H(\tilde{\gamma})\, 
      \overline{S}(\gamma).
\end{align}
This decomposition shows that the logical Hadamard gate can be implemented by alternating layers of the logical $S$ gates with $\gamma$ or $\tilde{\gamma}$ and the transversal physical Hadamard gates with $\tilde{\gamma}$.

Next, let us rewrite this gate in terms of the gauge fields. Similarly to the case of the logical $S$, we can express \Cref{eq:decomposition_of_logical_X_gate_part} as
\begin{align}
    \exp(i\frac{\pi}{2}\frac{I-\overline{X}(\tilde{\gamma})}{2})
    = \exp(i\frac{\pi}{2} b(\tilde{\gamma})^2).
\end{align}
Consequently, the logical Hadamard gate is written as 
\begin{align}
    \label{eq:gauge_field_expression_of_logical_H_gate}
    \overline{H}(\gamma)
    = \exp(-i\frac{\pi}{4})
      \exp(i\frac{\pi}{2} a(\gamma)^2)
      \exp(i\frac{\pi}{2} b(\tilde{\gamma})^2)
      \exp(i\frac{\pi}{2} a(\gamma)^2).
\end{align}

The logical operation of this logical Hadamard gate depends only on the homology class of $\gamma$, similar to the logical $S$ gate.
We show the detailed proof not here but in \Cref{app:cohomology_dependence_of_logical_gates}.
In short, it follows the same reasoning as for the logical $S$ gate.
The second and fourth terms in \Cref{eq:gauge_field_expression_of_logical_H_gate} are the same as the logical $S$ gate, which we have already shown to depend only on the homology class.
The third term can be shown to depend only on the cohomology class by a similar reasoning, since it is of a similar form to the logical $S$ gate replaced by the magnetic gauge field.
In this way, we can show that
\begin{align}
    \overline{H}(\gamma_1) 
    \sim \overline{H}(\gamma_2),
\end{align}
with $\gamma_1, \gamma_2 \in \ker \del_1$ satisfying $\gamma_1 = \gamma_2 + \del_2 f$ for some $f \in C_2$.
This confirms that the logical Hadamard gate is well-defined in the code space.

\subsection{(Multi-)controlled-Z gate}

In this subsection, we consider the controlled-$Z$ (CZ) gate and its generalization, the multi-controlled-$Z$ gate.
We begin by recalling the definition of the physical CZ gate between qubits $i$ and $j$:
\begin{align}
    \label{eq:physical_cz_gate}
    CZ_{i,j} 
    = I_j P^+_i + Z_j P^-_i
    = \exp\left(i\pi \frac{I-Z_i}{2} \frac{I-Z_j}{2}\right),
\end{align}
where $P^{\pm}_i \coloneqq (I\pm Z_i)/2$ are the projectors onto the computational basis states $\ket{0}$ and $\ket{1}$ of the $i$-th qubit.
Using this form, the logical CZ gate acting on two logical qubits $\gamma_1, \gamma_2 \in H_1(C)$ can be written as
\begin{align}
    \overline{CZ}(\gamma_1,\gamma_2)
    = \exp\left(i\pi \frac{I-\overline{Z}(\gamma_1)}{2} \frac{I-\overline{Z}(\gamma_2)}{2}\right).
\end{align}

Let us evaluate the exponential on the right-hand side as 
\begin{align}
    \frac{I-\overline{Z}(\gamma_1)}{2}\frac{I-\overline{Z}(\gamma_2)}{2}
    &= \sum_{i,j}\gamma_1^i \gamma_2^j z_i z_j + 2 \Delta',
\end{align}
where $\Delta'$ contains terms with three or more $z_i$ factors, and gives $\exp(2i\pi\Delta')=1$ since it has integer eigenvalues.
Therefore, the logical CZ gate is simplified as
\begin{align}
    \overline{CZ}(\gamma_1,\gamma_2)
    = \exp\left(i\pi \sum_{i,j}\gamma_1^i \gamma_2^j z_i z_j\right).
\end{align}
We can further decompose this operator into physical CZ gates.
Let $\mathrm{supp}(\gamma_1)=\{i_1,\ldots,i_m\}$ and $\mathrm{supp}(\gamma_2)=\{j_1,\ldots,j_n\}$ denote the supports of the two logical qubits.
Then the logical CZ gate decomposes into a product of physical CZ gates for any pair of qubits in the supports as
\begin{align}
    \overline{CZ}(\gamma_1,\gamma_2)
    = \prod_{k=1}^{m} \prod_{l=1}^{n} CZ_{i_k,j_l}.
\end{align}
For consistency, we define the CZ gate with identical control and target qubits as the $Z$ gate itself:
\begin{align}
    CZ_{i,i} = \exp(i\pi z_i z_i) = \exp(i\pi z_i) = Z_i.
\end{align}

Note that the exponent can be compactly expressed as
\begin{align}
    \sum_{i,j}\gamma_1^i\gamma_2^j z_i z_j = a(\gamma_1)a(\gamma_2),
\end{align}
and thus the logical CZ gate takes a particularly simple gauge field form~\footnote{
    In the $U(1)$ gauge theory correspondence, this logical CZ gate may be expressed as
    \begin{align}
        \overline{CZ}(\gamma_1,\gamma_2) \leftrightarrow \exp(i \oint_{\gamma_1} dx\, a(x) \oint_{\gamma_2} dy\, a(y)),
    \end{align}
    where $\gamma_1$ and $\gamma_2$ are nontrivial loops of the manifold where the theory is defined.
}:
\begin{align}
    \overline{CZ}(\gamma_1,\gamma_2) = \exp(i\pi a(\gamma_1)a(\gamma_2)).
\end{align}

With this expression, we can easily verify that the logical CZ gate depends only on the homology classes of $\gamma_1$ and $\gamma_2$.
The detailed proof is provided in \Cref{app:cohomology_dependence_of_logical_gates}, but the essential idea is as follows.
 Decompose $\gamma_i = \gamma'_i + \del f_i$ with some face $f_i$ for $i=1,2$.
Then
\begin{multline}
    \exp(i\pi a(\gamma_1)a(\gamma_2))
    = \exp(i\pi a(\gamma'_1)a(\gamma'_2)) \times  \\ 
      \exp(i\pi a(\gamma'_1)a(\del f_2))
        \exp(i\pi a(\del f_1)a(\gamma'_2))
        \exp(i\pi a(\del f_1)a(\del f_2)).
\end{multline}
All terms in the second line give logical identities, since they correspond to products of controlled-$Z$ stabilizer gates, which act trivially on the code space.
Thus, we conclude that
\begin{align}
    \overline{CZ}(\gamma_1,\gamma_2) \sim \overline{CZ}(\gamma'_1, \gamma'_2),
\end{align} 
demonstrating that the logical CZ gate depends only on the homology classes of $\gamma_1$ and $\gamma_2$.
This confirms that the above decomposition into physical CZ gates is a valid logical CZ gate.

The above discussion extends naturally to the multi-controlled-$Z$ gate with $m-1$ control qubits and one target qubit, which we denote by $C^{m-1}Z$. 
It can be formally defined as
\begin{align}
    C^mZ_{i_1,\ldots,i_m}
    &= I_{i_m} (I_{i_1}\cdots I_{i_{m-1}} - P^+_{i_1}\cdots P^+_{i_{m-1}})
    + Z_{i_m} P^+_{i_1}\cdots P^+_{i_{m-1}}\\
    &= \exp(i\pi \frac{I-Z_{i_1}}{2}\cdots\frac{I-Z_{i_m}}{2}).
\end{align}
Accordingly, the logical $C^{m-1}Z$ gate acting on $m$ logical qubits $\gamma_1,\ldots,\gamma_m\in H_1(C)$ is given by
\begin{align}
    \overline{C^{m-1}Z}(\gamma_1,\ldots,\gamma_m)
    = \exp(i\pi \frac{I-\overline{Z}(\gamma_1)}{2}\cdots \frac{I-\overline{Z}(\gamma_m)}{2}).
\end{align}
To obtain the physical decomposition, we expand the product in the exponent modulo $2$ as
\begin{align}
    \label{eq:logical_cmz_gate_decomposition}
    \frac{I-\overline{Z}(\gamma_{i_1})}{2}\cdots\frac{I-\overline{Z}(\gamma_{i_m})}{2}
    = \sum_{j_1,\ldots,j_m} \gamma_{i_1}^{j_1}\cdots\gamma_{i_m}^{j_m} z_{j_1}\cdots z_{j_m}
    \quad (\text{mod } 2).
\end{align}
Hence, the logical $C^{m-1}Z$ decomposes as follows:
\begin{align}
    \overline{C^{m-1}Z}(\gamma_1,\ldots,\gamma_m)
    = \prod_{k=1}^{\alpha_1}\cdots\prod_{l=1}^{\alpha_m} C^{m-1}Z_{s^1_k,\ldots,s^m_l},
\end{align}
where we defined the supports of each logical qubit as $\mathrm{supp}(\gamma_n) = \{s^n_1,\ldots,s^n_{\alpha_n}\}$ for $n=1,\ldots,m$.
This shows that the logical $C^{m-1}Z$ gate is realized as the product of physical $C^{m-1}Z$ gates that act on all possible combinations of the supports of $\gamma_1,\ldots,\gamma_m$.
For consistency, we recursively define the degenerate cases of $C^{m-1}Z$:
if two or more control qubits coincide, the operation reduces to one with fewer control qubits,
and if control and target coincide, it reduces to the $Z$ gate.
For example, $C^3Z_{i,i,j,k} = C^2Z_{i,j,k}, C^2Z_{i,i,i} = CZ_{i,i} = Z_i$.

Note that the exponential term in \Cref{eq:logical_cmz_gate_decomposition} can again be expressed compactly using gauge fields as
\begin{align}
    \sum_{i_1,\ldots,i_m}\gamma_{i_1}^{j_1}\cdots\gamma_{i_m}^{j_m} z_{j_1}\cdots z_{j_m}
    = a(\gamma_1)\cdots a(\gamma_m).
\end{align}
Then, the logical multi-controlled-$Z$ gate takes the compact gauge field form:
\begin{align}
    \overline{C^{m-1}Z}(\gamma_1,\ldots,\gamma_m)
    = \exp(i\pi a(\gamma_1)\cdots a(\gamma_m)).
\end{align}
This expression is consistent with previous work~\cite{breuckmann2024cupsgatesicohomology,hsin2024classifyinglogicalgatesquantum,zhu2025topologicaltheoryqldpcnonclifford}, 
where logical $C^mZ$ gates were constructed using the cup product.

The operation of these multi-controlled-$Z$ gates on the code space is well-defined, as is the same as the CZ gate case:
\begin{align}
    \overline{C^{m-1}Z}(\gamma_1,\ldots,\gamma_m)
    \sim \overline{C^{m-1}Z}(\gamma'_1,\ldots,\gamma'_{m}),
\end{align}
where $\gamma_i, \gamma'_i \in \ker \del_1$ for $i=1,\ldots,m$ satisfy $\gamma_i = \gamma'_i + \del_2 f_i$ with some $f_i \in C_2$.  
The detailed proof is provided in \Cref{app:cohomology_dependence_of_logical_gates}.
This confirms that the above decomposition into physical multi-controlled-$Z$ gates is a valid logical $C^{m-1}Z$ gate.

\subsection{T gate}

So far, we have shown that the logical $S$, Hadamard, and $CZ$ gates can be constructed from physical gates using the gauge field formalism.
All of these are Clifford gates, and their decompositions can be achieved using only physical Clifford operations, as demonstrated in the previous subsections.
In this subsection, we turn to the logical $T$ gate, which is outside the Clifford group.
As one might expect, its logical implementation requires physical non-Clifford gates such as the $T$ gate, the controlled-$S$ gate, and the controlled-controlled-$Z$ ($CCZ$) gate.
Although the resulting decomposition is somewhat involved, we will show that the gauge field formalism allows it to be expressed in a compact form.

The physical $T$ gate is defined as
\begin{align}
    T = 
    \mqty(1 & 0 \\ 0 & e^{i\pi/4})
    = \exp(i\frac{\pi}{4}\frac{I - Z}{2}).
\end{align}
Following the same prescription as before, the logical $T$ gate acting on a single logical qubit $\gamma \in H_1(C)$ is given by
\begin{align}
    \overline{T}(\gamma)
    = \exp(i\frac{\pi}{4}\frac{I - \overline{Z}(\gamma)}{2}).
\end{align}

Expanding the exponential on the right-hand side modulo 2, we have
\begin{align}
    \frac{1}{4}\frac{I - \overline{Z}(\gamma)}{2}
    = \frac{1}{4}\sum_i \gamma^i z_i
      - \frac{1}{2}\sum_{i<j}\gamma^i\gamma^j z_i z_j
      + \sum_{i<j<k}\gamma^i\gamma^j\gamma^k z_i z_j z_k \quad (\text{mod } 2).
\end{align}
Each term corresponds, respectively, to single-, two-, and three-qubit interactions.
Exponentiating these terms yields the complete logical $T$ gate decomposition:
\begin{align}
    \overline{T}(\gamma)
    &= \prod_i \exp(i\frac{\pi}{4}\gamma^i z_i)
       \prod_{i<j}\exp(-i\frac{\pi}{2}\gamma^i\gamma^j z_i z_j)
       \prod_{i<j<k}\exp(i\pi\gamma^i\gamma^j\gamma^k z_i z_j z_k)\\
    &= \prod_{k=1}^m T_{i_k}
       \prod_{1\leq k_1 < k_2 \leq m} CS_{k_1,k_2}^\dagger
       \prod_{1\leq k_1 < k_2 < k_3 \leq m} CCZ_{k_1,k_2,k_3},
\end{align}
where $T_i$ is the physical $T$ gate acting on the $i$-th qubit, $CS_{i,j}$ is the controlled-$S$ gate with the control qubit $i$ and the target $j$, and $CCZ_{i,j,k}$ is the controlled-controlled-$Z$ gate with control qubits $i,j$ and the target $k$.
We defined the support of $\gamma$ as $\mathrm{supp}(\gamma) = \{i_1,\ldots,i_m\}$ with $i_k < i_{k+1}$.
We find that the logical $T$ gate is implemented by applying transversal $T$ gates on the support of $\gamma$, controlled-$S$ gates between every pair within that support (with the dagger indicating the inverse operation) and $CCZ$ gates among every triplet within that support.
We define the controlled-$S$ gate with identical control and target as the $S$ gate itself as $CS_{i,i} = S_i$, similarly to the CZ gate case.

We can express this result more elegantly in terms of the gauge field.
Using the relations in \Cref{app:useful_formulae}, we obtain~\footnote{
The counterpart to this logical $T$ gate in the $U(1)$ gauge theory correspondence may be expressed as
\begin{align}
    \overline{T}(\gamma) \leftrightarrow \exp(
        \frac{i}{4} \biggl(\oint_\gamma dx\, a(x)\biggr)^3
        -\frac{3i}{4} \biggl(\oint_\gamma dx\, a(x)\biggr)^2
        + \frac{i}{2} \oint_\gamma dx\, a(x)),
\end{align}
which involves the cube and square of the loop integral of the gauge field.
}
\begin{align}
    \overline{T}(\gamma) = 
    \exp(i\frac{\pi}{4}\frac{I - \overline{Z}(\gamma)}{2})
    = \exp(i\pi \left(\frac{1}{2}a(\gamma)^3 - \frac{3}{4}a(\gamma)^2 + \frac{1}{2}a(\gamma)\right))
\end{align}
This expression highlights that the $T$ gate introduces a cubic contribution of the gauge field $a(\gamma)$, which elevates it beyond the Clifford hierarchy level of the logical $S$ and $CZ$ gates.

The logical operation of this logical $T$ gate depends only on the homology class of $\gamma$, similar to the previous logical gates.
The proof is provided in \Cref{app:cohomology_dependence_of_logical_gates}, which is hard to show with only its physical gate decomposition and the gauge field expression greatly simplifies the analysis.
Finally, for $\gamma_1, \gamma_2 \in \ker \del_1$ satisfying $\gamma_1 = \gamma_2 + \del_2 f$ with some $f \in C_2$, we have
\begin{align}
    \overline{T}(\gamma_1) \sim \overline{T}(\gamma_2),
\end{align}
which confirms that the above decomposition into physical gates is a valid logical $T$ gate.

\subsection{Discussion}
We have constructed a broad class of logical gates within the gauge field formalism in the previous subsections. 
As we have seen, the gauge field expressions provide compact and insightful representations of these logical gates, facilitating the analysis of their properties, such as homology dependence.

How can our method be extended to other logical gates?
Clearly, it can apply to any diagonal gate whose form can be written in terms of the Pauli $Z$ operator, given by
\begin{align}
    \mathcal{O}_k(\gamma_1,\ldots,\gamma_m)
    \coloneqq
    \exp(i\frac{\pi}{2^{k-1}}
        \frac{I-\overline{Z}(\gamma_1)}{2}\cdots 
        \frac{I-\overline{Z}(\gamma_m)}{2}
        )
\end{align}
with positive integers $k,m$.
The cases $m=1$ and $k=1,2,3$ reproduce the logical $Z$, $S$, and $T$ gates, while $m\geq 2$ gives their controlled versions.
In parallel, we can construct diagonal gates in $+$ basis, expressed in terms of the Pauli $X$ operator with the magnetic gauge field $b$.
Furthermore, like the logical Hadamard gate, our method can handle logical gates formed by alternating layers of diagonal gates in the $Z$ and $X$ bases.

However, these gates do not exhaust all possible logical operations, and extending our method to more general non-diagonal gates faces several challenges.
For example, the logical CNOT may be obtained from the relation between CNOT, CZ, and Hadamard gates:
\begin{align}
    CNOT_{i,j} = H_j\, CZ_{i,j}\, H_j,
\end{align}
although this approach leads to a deep circuit that involves many physical Hadamard and $S$ gates.
Alternatively, we may start from the exponential form
\begin{align}
    CNOT_{i,j}
    = \exp\!\left(i\pi \frac{I-Z_i}{2}\frac{I-X_j}{2}\right),
\end{align}
which suggests the logical operator
\begin{align}
    \overline{CNOT}(\gamma_1,\tilde{\gamma}_2)
    = \exp\!\left(i\pi
    \frac{I-\overline{Z}(\gamma_1)}{2}
    \frac{I-\overline{X}(\tilde{\gamma}_2)}{2}\right),
\end{align}
where $\gamma_1\in H_1(C)$ and $\tilde{\gamma}_2\in H^1(C)$ represent the logical control and the target qubits.  
If the supports of $\gamma_1$ and $\tilde{\gamma}_2$ do not overlap, the corresponding electric and magnetic fields commute, allowing a decomposition similar to the diagonal case.
In general, however, they may overlap, and the resulting non-commuting gauge fields obstruct a direct extension of our techniques.
Developing methods for efficiently handling such non-commuting fields remains an open direction.

We now comment on the homology dependence of our logical gates.  
Although one might imagine it to be a special feature of the specific gates studied in this work, this property holds in general for any logical operator constructed through our first and second prescriptions.
Let a logical operator be expressed as a polynomial in logical Pauli operators:
\begin{multline}
    \overline{U}(\overline{X}(\tilde{\gamma}_1),\ldots,
    \overline{X}(\tilde{\gamma}_k),
    \overline{Z}(\gamma_1),\ldots,\overline{Z}(\gamma_k))
    \\
    = \sum_{i_1,\ldots,i_{2k}}
    c_{i_1,\ldots,i_{2k}}
    \,
    \overline{X}(\tilde{\gamma}_1)^{i_1}\cdots 
    \overline{X}(\tilde{\gamma}_k)^{i_k}
    \overline{Z}(\gamma_1)^{i_{k+1}}\cdots
    \overline{Z}(\gamma_k)^{i_{2k}}.
\end{multline}
Write $\gamma_m=\gamma'_m+\partial_2 f_m$ and 
$\tilde{\gamma}_n=\tilde{\gamma}'_n+\partial_1^T v_n$ for some faces $f_m$ and vertices $v_n$ with another set of 1-chains $\gamma'_m$ and $\tilde{\gamma}'_n$. 
Substituting these expressions, all terms involving boundaries reduce to stabilizers, and hence act trivially on the code space.  
Thus, each monomial is logically equivalent to the one with boundaries removed, and summing over all monomials yields the following:
\begin{align}
    \overline{U}(\overline{X}(\tilde{\gamma}_1),\ldots,\overline{Z}(\gamma_k))
    \sim
    \overline{U}(\overline{X}(\tilde{\gamma}'_1),\ldots,\overline{Z}(\gamma'_k)),
\end{align}
demonstrating that the logical operation of $\overline{U}$ depends only on the (co)homology classes of its arguments.  
The gauge-field calculations presented earlier simply make this dependence explicit, whereas establishing it directly from the physical gate decompositions would be highly nontrivial.

Finally, let us remark on multi-qubit gates acting across different copies of a code.  
If we consider several CSS codes simultaneously, the natural setting of the whole Hilbert space is the direct sum of the individual ones.
The gauge fields can extend straightforwardly to this direct sum, and the logical gates can be constructed in the same manner as in the single-code case.  
In particular, non-diagonal gates between two code blocks pose no difficulty, since the gauge fields associated with different code copies act on non-overlapping supports and therefore commute.  
For example, the logical CNOT between two code copies becomes a transversal physical CNOT applied between all qubits in the supports of the control and target logical qubits, consistent with the known results~\cite{Gottesman:1997zz,hsin2024classifyinglogicalgatesquantum}.

\section{Conclusion and Outlook}
\label{sec:conclusion}

In this work, we developed a systematic method for constructing logical operations in general quantum CSS codes using the gauge field formalism.  
By introducing electric and magnetic gauge fields $a$ and $b$ as operator-valued cochains on the underlying chain complex, the formalism establishes a direct algebraic correspondence between the structure of CSS codes and concepts from lattice gauge theory.  
Within this framework, we expressed a broad class of logical Clifford and non-Clifford gates—including the $S$, Hadamard, $CZ$, multi-controlled-$Z$, and $T$ gates—as exponential of polynomial functions of these gauge fields.  
This expression provides transparent decompositions into physical gates and shows that the logical operation of these gates depends only on the (co)homology classes of the associated 1-chains, guaranteeing well-definedness at the logical level.

A natural direction for future work is to take into account fault-tolerance and circuit-depth.  
The constructions presented here focus solely on expressing logical gates in terms of physical operators, without imposing noise-resilience or architectural constraints.  
Practical implementations require logical gates that remain robust under realistic noise models, and developing a fault-tolerant formulation directly within the gauge field language may illuminate how transversality and other protection mechanisms emerge from code structure.
Incorporating recent progress on transversal logical gates, such as Clifford gates for self-dual CSS codes~\cite{Tansuwannont:2025riy} and diagonal gates for general stabilizer codes~\cite{Anderson:2014voa}, we may clarify the algebraic origins of these constraints from the gauge-field viewpoint.
Furthermore, many fault-tolerant protocols, such as gate teleportation~\cite{Gottesman:1999tea}, also rely on ancilla qubits and measurements.  
Integrating such measurement-based or ancilla-assisted procedures with the gauge field framework likely requires extending the formalism to density matrices and quantum channels, an important step toward connecting this approach with realistic fault-tolerant architectures.

Another line of extensions concerns stabilizer codes beyond the CSS codes.  
Although general stabilizer codes do not separate clearly into $X$- and $Z$-type operators, they still admit descriptions using length-2 chain complexes, suggesting that the gauge field formalism may generalize after appropriately reformulating the correspondence between qubits, stabilizers, and chain complex data.  
Moreover, subsystem codes~\cite{Kribs:2004dqj,Kribs:2006doz} offer further opportunities, as their intrinsic redundant (``gauge") degrees of freedom naturally invite a gauge-theoretic interpretation.  
Identifying how these additional structures manifest in the formalism may deepen the conceptual connection between quantum error correction codes and lattice gauge theory.

The framework may also be extended to codes defined on higher-dimensional chain complexes.  
Our analysis focused on code-capacity noise, where errors occur only on physical qubits; however, fault-tolerant architectures must handle measurement errors as well. 
Accounting for these errors leads to longer chain complexes, such as in single-shot error-correcting codes represented by length-4 complexes with qubits on 2-chains and measurement-error syndromes on 0- and 4-chains~\cite{Campbell:2019cyo}.  
In such settings, the associated gauge theories reside on four-dimensional hypergraphs rather than two-dimensional ones, and exploring their properties may reveal new insights into fault tolerance against data and measurement errors.

A broader question concerns the range of quantum error-correcting codes that can be formulated within this gauge-theoretic language.  
For CSS codes on the level-$N$ qudits, the associated gauge theories are $\mathbb{Z}_N$ lattice gauge theories~\cite{hsin2024classifyinglogicalgatesquantum}, which approach $U(1)$ gauge theories in the large-$N$ limit and are the foundation of numerical studies of lattice gauge theories~\cite{Creutz:1982dn,Creutz:1983ev}.  
However, physically relevant gauge groups are often non-Abelian; the Standard Model in particle physics is based on the gauge group $SU(3)\times SU(2)\times U(1)$~\cite{Weinberg:1967tq}.  
Finite subgroups of $SU(2)$ and $SU(3)$, such as binary polyhedral groups or $\mathrm{PSL}(2,\mathbb{Z}_p)$, have long been used as discrete approximations of lattice gauge theories with these non-Abelian gauge groups~\cite{Petcher:1980cq,Bhanot:1981xp,Bhanot:1981pj}.  
Constructing quantum codes associated with such finite non-Abelian groups and formulating their gauge-field descriptions would represent a significant step toward developing new classes of quantum error-correcting codes.

In summary, the gauge field formalism offers a compact and flexible language for describing logical operations in CSS codes.  
Extending it to more general stabilizer and subsystem codes, to higher-dimensional chain complexes relevant for realistic noise models, and to non-Abelian groups may uncover deeper connections between quantum error correction, algebraic topology, and quantum field theory.

\section*{Acknowledgments}

The authors thank people in the QEC team of the Fujii lab and in the Center for Quantum Information and Quantum Biology at the University of Osaka for valuable discussions.
This work is supported by MEXT Quantum Leap Flagship Program (MEXT Q-LEAP) Grant No. JPMXS0120319794, JST COI-NEXT Grant No. JPMJPF2014, JST Moonshot R\&D Grant No. JPMJPF2014, and JST CREST JPMJCR24I3.

\appendix
\section*{Appendix}
\numberwithin{equation}{section}

\section{Useful Formulae and Theorems}
\label[appendix]{app:useful_formulae}
In this appendix, we summarize some useful formulae and theorems used in the main text.

\subsection{Commutation relation}
\label[appendix]{app:commutation_relation}
Here, we derive the commutation relation between the gauge field $a,b$ and the Pauli operator $X,Z$.
First, we consider the commutation relation between a single-qubit Pauli $X_j$ and $z_i$, given by
\begin{align}
    X_j z_i X_j &= z_i + \delta_{ij} - 2\delta_{ij}z_i 
    = \exp(i\pi \inp{e_i}{e_j}) z_i + \frac{1 - \exp(i\pi \inp{e_i}{e_j})}{2},
\end{align}
where we used the definition of the inner product $\inp{e_i}{e_j} = \delta_{ij}$ and the fact that for a binary variable $\alpha = 0,1$,
\begin{align}
    \label{eq:useful_formula_for_exp_of_binary_variable}
    \exp(i\pi \alpha) = 1 - 2\alpha.
\end{align}
Based on this commutation relation, we can show that the commutation relation between the Pauli $X$ operator $\overline{X}(\gamma_2) = \prod_j X_j^{\gamma_2^j}$ and $z_i$ is given by
\begin{align}
    \overline{X}(\gamma_2) z_i \overline{X}(\gamma_2) 
    &= \exp(i\pi \inp{e_i}{\gamma_2})z_i + \frac{1 - \exp(i\pi \inp{e_i}{\gamma_2})}{2}\\
    &= (1 - 2\inp{e_i}{\gamma_2})z_i + \inp{e_i}{\gamma_2}\\
    & = a(e_i) - 2 a(e_i \cap \gamma_2) + \inp{e_i}{\gamma_2},
    \label{eq:commutation_relation_between_a_and_X}
\end{align}
where we again used \Cref{eq:useful_formula_for_exp_of_binary_variable} on the second line and the definition of the intersection $\gamma_2 \cap e_i = \inp{\gamma_2}{e_i} e_i$ on the last line.
Multiplying $\gamma_1^i$ by \Cref{eq:commutation_relation_between_a_and_X} and taking summation over $i$, we find that the commutation relation between $a(\gamma_1) = \sum_i \gamma_1^i z_i$ and $\overline{X}(\gamma_2)$ is given by
\begin{align}
    \overline{X}(\gamma_2) a(\gamma_1) \overline{X}(\gamma_2) &= a(\gamma_1) - 2a(\gamma_1 \cap \gamma_2) + \inp{\gamma_1}{\gamma_2}.
\end{align}
Multiplying the Hadamard operator $H = \prod_i H_i$ from both sides and using the relation $H z_i H = x_i$, we obtain the commutation relation between the magnetic gauge field $b(\gamma_1)$ and $\overline{Z}(\gamma_2)$ as
\begin{align}
    \overline{Z}(\gamma_2) b(\gamma_1) \overline{Z}(\gamma_2) &= b(\gamma_1) - 2b(\gamma_1 \cap \gamma_2) + \inp{\gamma_1}{\gamma_2}.
\end{align} 
Using the above relations, we can also derive the familiar commutation relation between the Pauli $X$ and $Z$ operators as 
\begin{align}
    X(\gamma_1) Z(\gamma_2) X(\gamma_1) &= \exp(i\pi X(\gamma_1) a(\gamma_2) X(\gamma_1)) \\
    &= \exp(i\pi (a(\gamma_2) - 2 a(\gamma_1 \cap \gamma_2) + \inp{\gamma_1}{\gamma_2})) \\
    &= \exp(i\pi \inp{\gamma_1}{\gamma_2}) Z(\gamma_2),
\end{align}
where we used the fact that $\exp(-2i\pi a(\gamma \cap \gamma_2)) = I$ since $a(\gamma \cap \gamma_2)$ is a diagonal matrix with integer eigenvalues, commuting with $a(\gamma_2)$.

\subsection{Expansions of powers of electric gauge field}
In this section, we summarize useful formulae by expanding the powers of the electric gauge field $a(\gamma) = \sum_i \gamma^i z_i$.
Using the fact that $(\gamma^i)^2 = \gamma^i$ and $z_i^2 = z_i$, we find the following relations:
\begin{subequations}
    \begin{align}
        a(\gamma) &= \sum_i \gamma^i z_i \\
        a(\gamma)^2 &= \sum_i \gamma^i z_i + 2\sum_{i<j} \gamma^i \gamma^j z_i z_j \\
        a(\gamma)^3 &=  \sum_i \gamma^i z_i + 6\sum_{i<j} \gamma^i \gamma^j z_i z_j + 6\sum_{i<j<k} \gamma^i \gamma^j \gamma^k z_i z_j z_k.
    \end{align}
\end{subequations}
More generally, the $m$-th power of $a(\gamma)$ can be expanded as
\begin{align}
    \label{eq:expansion_of_a_gamma_power}
    a(\gamma)^m = \sum_{r=1}^m r! S(m,r) \sum_{i_1 < \cdots < i_r} \gamma^{i_1} \cdots \gamma^{i_r} z_{i_1} \cdots z_{i_r},
\end{align}
where $S(m,r)$ is the Stirling number of the second kind, which counts the number of ways to partition a set of $m$ elements into $r$ non-empty subsets.

Solving for the terms on the right hand side, we find that
\begin{subequations}
    \label{eq:useful_formulae_a_gamma}
    \begin{align}
        \sum_i \gamma^i z_i & = a(\gamma), \\
        \label{eq:useful_formula_a_gamma_squared}
        \sum_{i<j} \gamma^i \gamma^j z_i z _j & = \frac{1}{2} a(\gamma)^2 - \frac{1}{2} a(\gamma), \\
        \sum_{i<j<k} \gamma^i \gamma^j \gamma^k z_i z_j z_k & = \frac{1}{6} a(\gamma)^3 - \frac{1}{2} a(\gamma)^2 + \frac{1}{3} a(\gamma).
    \end{align}
\end{subequations}
Based on \Cref{eq:expansion_of_a_gamma_power}, we can also derive the following general formula:
\begin{align}
    \sum_{i_1 < \cdots < i_r} \gamma^{i_1} \cdots \gamma^{i_r} z_{i_1} \cdots z_{i_r}
    = \frac{1}{r!} \sum_{m=1}^r (-1)^{m+r} s(r,m) a(\gamma)^m,
\end{align}
where $s(r,m)$ is the Stirling number of the first kind, which counts the number of permutations of $r$ elements with exactly $m$ disjoint cycles.

\subsection{Necessary and sufficient condition for identity operator in qubit system}
\label[appendix]{sec:IdentityCondition}
In this section, we show that a necessary and sufficient condition for an $n$-qubit operator $O$ to be the identity operator is given by the commutativity with all the Pauli operators.

\begin{theorem}
\label{thm:IdentityCondition}
An $n$-qubit operator $O$ is the identity operator with a constant factor if and only if it commutes with all the Pauli operators.
That is,
\begin{align}
O = c\, I_{2^n} \text{ with } c \in \mathbb{C} \iff {}^\forall n \in \mathbb{N},\,[X_n, O] = [Z_n, O] = 0,
\end{align}
\end{theorem}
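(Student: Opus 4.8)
The plan is to prove the nontrivial (``if'') direction by expanding $O$ in the basis of Pauli strings and showing that commutativity with all generators forces every coefficient except that of the identity to vanish. The ``only if'' direction is immediate, since any scalar multiple of $I_{2^n}$ commutes with every operator.

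First I would recall that the $4^n$ Pauli strings $P = P_1 \otimes \cdots \otimes P_n$ with each $P_j \in \{I, X, Y, Z\}$ form a basis of the space $M_{2^n}(\mathbb{C})$ of $2^n \times 2^n$ matrices. This follows from their orthogonality under the Hilbert-Schmidt inner product $\inp{A}{B} = \operatorname{Tr}(A^\dagger B)$, together with the dimension count $4^n = \dim M_{2^n}(\mathbb{C})$. Hence I may write $O = \sum_P c_P P$ uniquely. Next I would use the key algebraic fact that any two Pauli strings either commute or anticommute, and that each generator $Q \in \{X_i, Z_i\}$ squares to the identity. Thus for each $P$ we have $Q P Q = \epsilon_{Q,P}\, P$ with $\epsilon_{Q,P} \in \{+1,-1\}$. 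Conjugating the expansion by $Q$ and using $Q O Q = O$ (which follows from $[Q,O]=0$ together with $Q^2 = I$), the linear independence of the Pauli strings forces $c_P = \epsilon_{Q,P}\, c_P$ for every $P$, so $c_P = 0$ whenever $P$ anticommutes with $Q$.

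Then I would observe that a Pauli string $P$ commutes with all the generators $X_i$ and $Z_i$ only if it acts trivially on every qubit: a factor $X$ or $Y$ on qubit $i$ anticommutes with $Z_i$, while a factor $Z$ or $Y$ on qubit $i$ anticommutes with $X_i$. Therefore the only surviving term is $P = I_{2^n}$, giving $O = c_I\, I_{2^n}$, which completes the proof with $c = c_I$.

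The main obstacle — indeed the only nonroutine input — is establishing that the Pauli strings constitute a linearly independent spanning set of $M_{2^n}(\mathbb{C})$; once that is in hand the sign bookkeeping is purely mechanical. An equivalent route would invoke Schur's lemma after verifying that the $n$-qubit Pauli group acts irreducibly on $(\mathbb{C}^2)^{\otimes n}$, but that irreducibility is essentially the content of the theorem itself, so I would prefer the explicit basis-expansion argument.
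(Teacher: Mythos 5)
Your proof is correct, and it takes a genuinely different route from the paper's. The paper proceeds by induction on the number of qubits: it writes the $(n+1)$-qubit operator as a $2\times 2$ block matrix of $2^n\times 2^n$ blocks, uses commutation with $X_{n+1}$ and $Z_{n+1}$ to force $O = O_1 \otimes I_2$, and then invokes the inductive hypothesis, with the base case $n=1$ handled by the single-qubit Pauli expansion. You instead do a one-shot expansion of $O$ in the full basis of $4^n$ Pauli strings and kill every non-identity coefficient via the $\pm 1$ signs $\epsilon_{Q,P}$ coming from conjugation by the generators $X_i, Z_i$; the sign bookkeeping and the observation that only $P = I_{2^n}$ commutes with all generators are both handled correctly. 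The trade-off: the paper's induction is more self-contained, needing the Pauli-basis fact only for a single qubit (where it is immediate) and otherwise using elementary block-matrix algebra; your argument requires the standard but nontrivial input that the Pauli strings are Hilbert--Schmidt-orthogonal and hence a basis of $M_{2^n}(\mathbb{C})$, but in exchange it is shorter, avoids induction entirely, and yields the sharper intermediate statement that $c_P = 0$ for \emph{every} $P$ anticommuting with some generator --- i.e., that the commutant of the $n$-qubit Pauli group is trivial, which is exactly the Schur's-lemma content you allude to. Your remark that invoking Schur's lemma directly would be near-circular here is apt, and preferring the explicit expansion is the right call.
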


\begin{proof}
The sufficiency ($\Rightarrow$) is obvious.
Let us show the necessity ($\Leftarrow$) by mathematical induction on $n$.

\noindent \underline{Base case ($n=1$)}

 Consider a $1$-qubit operator $O$.
For any $1$-qubit operator $O$, we can express it as
\begin{align}
O = a_0 I_1 + a_1 X_1 + a_2 Y_1 + a_3 Z_1,
\end{align}
where $a_i \in \mathbb{C}$.
Applying condition $[X_1,O]=0$, or equivalently, $O = X_1 O X_1$, we get
\begin{align}
 a_0 I_1 + a_1 X_1 + a_2 Y_1 + a_3 Z_1 = a_0 I_1 + a_1 X_1 - a_2 Y_1 - a_3 Z_1,
\end{align}
which yields
\begin{align}
a_2 Y_1 + a_3 Z_1 =0
\end{align}
The solution of this equation is given by $a_2 = 0$ and $a_3 = 0$.
Similarly, we can show that $a_1 = 0$ by applying the condition $[Z_1,O]=0$.
Therefore, we have shown that $O = a_0 I_1$.
Thus, the base case is proved.

\noindent \underline{Inductive step}

Assume that the theorem holds for $n$-qubit operators and consider a $(n+1)$-qubit operator $O$ that commutes with all the $(n+1)$-qubit Pauli operators.
Then, we first consider the commutativity with the Pauli operators on the $n+1$-th qubit $X_{n+1},Z_{n+1}$.
We can express $O$ as
\begin{align}
    O = \mqty( O_1 & O_2 \\ O_3 & O_4),
\end{align}
where $O_1,O_2,O_3,O_4$ are $2^n \times 2^n$ matrices.
On the other hand, $X_{n+1}= I_{2^n} \otimes X$ and $Z_{n+1}= I_{2^n} \otimes Z$ are expressed as
\begin{align}
X_{n+1} = \mqty( 0 & I_{2^n} \\ I_{2^n} & 0), \quad Z_{n+1} = \mqty( I_{2^n} & 0 \\ 0 & -I_{2^n}).
\end{align}
Then, the conditions $[X_{n+1},O]=[Z_{n+1},O]=0$ are rewritten as
\begin{align}
X_{n+1} O X_{n+1} = \mqty( O_4 & O_3 \\ O_2 & O_1) = \mqty (O_1 & O_2 \\ O_3 & O_4) = O,\\
Z_{n+1} O Z_{n+1} = \mqty( O_1 & -O_2 \\ -O_3 & O_4) = \mqty (O_1 & O_2 \\ O_3 & O_4) = O.
\end{align}
which yields 
\begin{align}
O_1 = O_4, \quad O_2 = O_3 = 0.
\end{align}
Therefore, we can express $O$ as
\begin{align}
O = \mqty( O_1 & 0 \\ 0 & O_1) = O_1 \otimes I_2.
\end{align}
At this point, $O_1$ is a $2^n \times 2^n$ operator and must commute with all the rest $n$-qubit Pauli operators. 
Then, we can apply the inductive hypothesis to $O_1$ and conclude that $O_1 = c I_{2^n}$ with some constant $c \in \mathbb{C}$.
Thus, we have shown that $O = c I_{2^{n}} \otimes I_2 = c I_{2^{n+1}}$.
This completes the proof of the theorem.
\end{proof}

From this theorem we can easily see the relation between the type of operator $O$ and the constant factor $c$ as 
\begin{enumerate}
    \item $O$ : hermitian $\iff$  $c \in \mathbb{R}$
    \item $O$ : unitary $\iff$ $c = e^{i\theta}$ with $\theta \in \mathbb{R}$
    \item $O$ : hermitian and unitary $\iff$ $c = \pm 1$
\end{enumerate}

\section{(Co)homology Dependence of Logical Gates}
\label[appendix]{app:cohomology_dependence_of_logical_gates}
In this appendix, we give detailed proof that the logical gates constructed in \Cref{sec:logical_gate_and_gauge_field_formalism} depend only on the homology or cohomology class of their arguments.

Before moving onto examples, we introduce here the notion of logical equivalence between logical operators, as was briefly mentioned in the main text.
Let us consider two logical operators $\mathcal{O}_1$ and $\mathcal{O}_2$ and assume that they are unitary on the full Hilbert space of physical qubits, as usual in quantum error correction.
When they have the same operation (representation matrix) on the code space, these gates should be regarded as the same logical operator, even if they consist of different physical gates.
It is helpful to introduce the equivalence relation $\sim$ between these operators as 
\begin{align}
    \label{eq:equivalence_relation_for_logical_operator}
    \mathcal{O}_1 \sim \mathcal{O}_2
    \overset{\text{def}}{\iff} 
    \bra{\psi} \mathcal{O}_1 \ket{\phi} = \bra{\psi} \mathcal{O}_2 \ket{\phi},
\end{align}
for any arbitrary elements of the code space $\ket{\psi}$ and $\ket{\phi}$.
This definition is equivalent to
\begin{align}
    \mathcal{O}_1 \sim \mathcal{O}_2 \iff \mathcal{O}_1\mathcal{O}_2^{-1} \sim I.
\end{align}
This equivalence can be shown as follows.
 Consider a basis for the code space $\{\ket{\psi}_i\}$.
Then, we have
\begin{align*}
    &\bra{\psi_i} \mathcal{O}_1 \ket{\psi_j} = \bra{\psi_i} \mathcal{O}_2 \ket{\psi_j} 
    \\
    & \iff \sum_j \bra{\psi_i} \mathcal{O}_1 \ket{\psi_j} \bra{\psi_j}\mathcal{O}_2^{-1} \ket{\psi_k} = \sum_j \bra{\psi_i} \mathcal{O}_2 \ket{\psi_j} \bra{\psi_j} \mathcal{O}_2^{-1} \ket{\psi_k} \\
    & \iff \bra{\psi_i}\mathcal{O}_1 P_{\mathcal{C}}\mathcal{O}_2^{-1} \ket{\psi_k} = \bra{\psi_i} \mathcal{O}_2 P_{\mathcal{C}} \mathcal{O}_2^{-1} \ket{\psi_k} \\ 
    & \iff \bra{\psi_i} \mathcal{O}_1\mathcal{O}_2^{-1} \ket{\psi_k} = \bra{\psi_i} I \ket{\psi_k} \\
    & \iff \mathcal{O}_1\mathcal{O}_2^{-1} \sim I,
\end{align*}
where $P_{\mathcal{C}}\coloneqq \sum_j \ket{\psi_j}\bra{\psi_j}$ is the projector onto the code space, and we used the fact that $P_{\mathcal{C}} \mathcal{O}_2^{-1} \ket{\psi_j} = \mathcal{O}_2^{-1} \ket{\psi_j}$.
Note that the inverse $\mathcal{O}_2^{-1}$ of a logical operator $\mathcal{O}_2$ is also a logical operator, since it is unitary in the physical Hilbert space of a finite dimension.

Additionally, we define the terminology "logical identity operator" as the logical operator which is equivalent to the identity operator on the code space.
From \Cref{thm:IdentityCondition} in \Cref{sec:IdentityCondition}, we see that a logical operator $\mathcal{O}$ becomes a logical identity operator if and only if it commutes with all the logical $Z$ and $X$ operators:
\begin{align}
    \mathcal{O} \sim I \iff
    {}^\forall\,\gamma \in H_1(C), \comm{\mathcal{O}}{\overline{Z}(\gamma)} = 0 \text{ and }
    {}^\forall\,\tilde{\gamma} \in H^1(C), \comm{\mathcal{O}}{\overline{X}(\tilde{\gamma})} = 0.
\end{align}

Here, we give a useful theorem that will be used in the following subsections, which is related to controlled versions of logical identity operators.
The formal statement is given as follows:

\begin{theorem}
    \label{thm:ControlledLogicallyIdentityOperator}
    If $\mathcal{O}_A = \exp(iA)$ is a logical identity operator, the operator $\mathcal{O}_A(\gamma)$ defined as
    \begin{align}
    \mathcal{O}_A(\gamma) \coloneq \exp(i a(\gamma) A)
    \end{align}
    with any 1-chain $\gamma$ is also a logical identity operator.
\end{theorem}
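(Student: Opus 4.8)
The plan is to show directly that $\mathcal{O}_A(\gamma)=\exp(i\,a(\gamma)A)$ fixes every vector of the code space $\mathcal{C}$, which by the equivalence $\sim$ of \Cref{eq:equivalence_relation_for_logical_operator} is precisely the statement that it is a logical identity operator. The two ingredients I would combine are the spectral structure of $a(\gamma)$—diagonal in the computational basis with non-negative integer eigenvalues, since $a(\gamma)\ket{x_1\cdots x_n}=\bigl(\sum_i \gamma^i x_i\bigr)\ket{x_1\cdots x_n}$—and the fact that the hypothesis forces $\mathcal{O}_A=\exp(iA)$ to act as the \emph{exact} identity (not merely up to a phase) on $\mathcal{C}$.

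First I would record the latter fact. Because $\mathcal{O}_A$ is a logical operator it maps $\mathcal{C}$ into itself, so $\mathcal{O}_A\ket{\psi}\in\mathcal{C}$ for $\ket{\psi}\in\mathcal{C}$; because moreover $\mathcal{O}_A\sim I$, the vector $\mathcal{O}_A\ket{\psi}-\ket{\psi}$ lies in $\mathcal{C}$ yet is orthogonal to all of $\mathcal{C}$, hence to itself, so $\mathcal{O}_A\ket{\psi}=\ket{\psi}$. Consequently $\mathcal{O}_A^m\ket{\psi}=\ket{\psi}$ for every integer $m$ and every code vector $\ket{\psi}$.

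Next I would exploit the grading by $a(\gamma)$. Write $\mathcal{H}=\bigoplus_m V_m$ for the decomposition into eigenspaces $V_m=\{\,v: a(\gamma)v=m v\,\}$, with orthogonal projectors $P_m$ (diagonal and satisfying $\sum_m P_m=I$). Here $A$ commutes with $a(\gamma)$—as it must for $a(\gamma)A$ to be Hermitian and $\mathcal{O}_A(\gamma)$ unitary, and as is automatic in every application, where $A$ is a polynomial in the mutually commuting electric gauge fields. Commutativity means $A$, and hence $\mathcal{O}_A$, preserves each $V_m$, and on $V_m$ one has $i\,a(\gamma)A=imA$, so $\exp(i\,a(\gamma)A)$ restricts there to $\exp(imA)=\mathcal{O}_A^m$. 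Globally this reads
\begin{align}
    \mathcal{O}_A(\gamma)=\sum_m P_m\,\mathcal{O}_A^m .
\end{align}

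Finally I would combine the two observations: for any code vector $\ket{\psi}$,
\begin{align}
    \mathcal{O}_A(\gamma)\ket{\psi}=\sum_m P_m\,\mathcal{O}_A^m\ket{\psi}=\sum_m P_m\ket{\psi}=\Bigl(\sum_m P_m\Bigr)\ket{\psi}=\ket{\psi},
\end{align}
using $\mathcal{O}_A^m\ket{\psi}=\ket{\psi}$ followed by completeness. Thus $\mathcal{O}_A(\gamma)$ fixes $\mathcal{C}$ pointwise, so it is a logical operator equivalent to the identity, i.e.\ a logical identity operator. I expect the main obstacle to be exactly this final collapse: since $\gamma$ is an \emph{arbitrary} 1-chain, a code vector generically has nonzero components in several eigenspaces $V_m$, and there is no reason for those components to individually lie in $\mathcal{C}$. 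The argument survives only because $\mathcal{O}_A$ acts as the exact identity on $\mathcal{C}$, which lets each summand $P_m\mathcal{O}_A^m\ket{\psi}$ reduce to $P_m\ket{\psi}$ before completeness is applied; the weaker statement that $\mathcal{O}_A$ equals the identity up to a phase on $\mathcal{C}$ would not be enough.
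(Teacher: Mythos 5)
Your proof is correct, and it takes a genuinely different (though related) route from the paper's. The paper argues edge by edge: it factorizes $\mathcal{O}_A(\gamma)=\prod_i \exp(i\gamma^i z_i A)$, uses $z_i^n=z_i$ to show each single-edge factor equals the controlled operator $P_i^+ + P_i^-\mathcal{O}_A$, and then replaces $\mathcal{O}_A$ by $I$ under the projector to conclude $\mathcal{O}_A(e_i)\sim P_i^+ + P_i^-\cdot I = I$. You instead diagonalize $a(\gamma)=\sum_m m P_m$ globally and collapse $\mathcal{O}_A(\gamma)=\sum_m P_m\mathcal{O}_A^m$ in one step; your construction restricted to a single edge ($m\in\{0,1\}$, $P_0=P_i^+$, $P_1=P_i^-$) is exactly the paper's per-edge identity, so the computations are cousins, but yours handles all edges simultaneously and needs no product decomposition. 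Your version buys two things. First, you prove rather than assume that ``logical operator plus $\sim I$'' forces $\mathcal{O}_A\ket{\psi}=\ket{\psi}$ \emph{exactly} on the code space; the paper's final substitution $P_i^+ + P_i^-\mathcal{O}_A \sim I$ silently requires this, since $P_i^-\ket{\phi}$ is generally not a code vector and the equivalence of \Cref{eq:equivalence_relation_for_logical_operator} cannot by itself be applied inside the product $P_i^-\mathcal{O}_A$ — so your explicit lemma actually tightens a step the paper glosses over. Second, you state openly that $[A,a(\gamma)]=0$ is needed (for $\mathcal{O}_A(\gamma)$ to be unitary and for the spectral collapse); the paper uses the same assumption tacitly, both when commuting $z_i$ past $A$ in the series expansion and when splitting the exponential into a product over edges, and it indeed holds in every application since $A$ is there a polynomial in the mutually commuting electric gauge fields. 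Your conclusion is also marginally stronger in form: $\mathcal{O}_A(\gamma)$ fixes the code space pointwise, which directly exhibits it as a logical operator rather than only matching identity matrix elements between code states.
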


Intuitively, $\mathcal{O}_A(e_j)$ gives a controlled version of $\mathcal{O}_A$ with the control on the $j$-th qubit, and becomes logically identical.
As a result, $\mathcal{O}_A(\gamma)$ is a product of such controlled operators and thus also a logical identity operator.
The formal proof is given as follows:
\begin{proof}
$\mathcal{O}_A$ can be decomposed as 
\begin{align}
    \mathcal{O}_A = \prod_i \exp(i a(e_i) A)^{\gamma^i} = \prod_i \mathcal{O}_A(e_i)^{\gamma^i}.
\end{align}
So, it is sufficient to show that $\mathcal{O}_A(e_i)$ is a logical identity operator for any edge $e_i$.
Each $\mathcal{O}_A(e_i)$ is calculated as
\begin{align}
    \mathcal{O}_A(e_i) = \exp(i z_i A) = \sum_{n=0}^\infty \frac{(iA)^n}{n!} z_i^n = 1 - z_i + z_i \sum_{n=0}^\infty \frac{(iA)^n}{n!} = P_i^+ + P_i^- \mathcal{O}_A,
\end{align}
where $P_i^+ = (I + Z_i)/2 = 1-z_i $ and $P_i^- = (I - Z_i)/2 = z_i$ are the projectors onto the $\ket{0}$ and $\ket{1}$ states of the $i$-th qubit, respectively.
We used the fact that $z_i^n = z_i$ for any positive integer $n$.
Since $\mathcal{O}_A$ is a logical identity operator by the assumption, $\mathcal{O}_A(e_i)$ is logically equivalent to the identity operator:
\begin{align}
    \mathcal{O}_A(e_i) \sim P_i^+ + P_i^- \cdot I = I.
\end{align}
Thus, we have shown that $\mathcal{O}_A(\gamma)$ is a logical identity operator.
\end{proof}

We give a corollary of \Cref{thm:ControlledLogicallyIdentityOperator} in the following, which states that a product of the multi-controlled version of a logical identity operator is also a logical identity operator.
Applying \Cref{thm:ControlledLogicallyIdentityOperator} repeatedly, we immediately obtain this corollary:
\begin{corollary}
    \label{cor:ControlledLogicallyIdentityOperator}
    If $\mathcal{O}_A = \exp(iA)$ is a logical identity operator, the operator 
    \begin{align}
    \mathcal{O}_A(\gamma_1, \gamma_2, \ldots, \gamma_m) \coloneq \exp(ia(\gamma_1) a(\gamma_2) \cdots a(\gamma_m) A)
    \end{align}
    with any 1-chains $\gamma_1, \gamma_2, \ldots, \gamma_m$ is also a logical identity operator.
\end{corollary}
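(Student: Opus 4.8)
The plan is to prove the statement by induction on the number $m$ of 1-chains, using \Cref{thm:ControlledLogicallyIdentityOperator} both as the base case and as the engine of the inductive step. The key structural observation is that every $a(\gamma_j) = \sum_i \gamma_j^i z_i$ is diagonal in the computational basis, so the factors $a(\gamma_1),\ldots,a(\gamma_m)$ mutually commute and the product is unambiguous; this also lets us peel off one factor at a time from the left. For the base case $m=1$, the claim that $\exp(i a(\gamma_1) A)$ is a logical identity whenever $\exp(iA)$ is one is exactly the content of \Cref{thm:ControlledLogicallyIdentityOperator}, so nothing remains to prove.

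For the inductive step, I would assume the statement for any $m-1$ 1-chains and any logical identity operator. Given $\mathcal{O}_A = \exp(iA)$ a logical identity and 1-chains $\gamma_1,\ldots,\gamma_m$, I would first apply the inductive hypothesis to the chains $\gamma_2,\ldots,\gamma_m$ to conclude that $\exp(i a(\gamma_2)\cdots a(\gamma_m) A)$ is itself a logical identity operator. Setting $B \coloneqq a(\gamma_2)\cdots a(\gamma_m) A$, so that $\exp(iB)$ is a logical identity, I would then invoke \Cref{thm:ControlledLogicallyIdentityOperator} a second time, now with the single chain $\gamma_1$ and with $A$ replaced by $B$, to deduce that $\exp(i a(\gamma_1) B) = \exp(i a(\gamma_1) a(\gamma_2)\cdots a(\gamma_m) A)$ is a logical identity. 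This is precisely $\mathcal{O}_A(\gamma_1,\ldots,\gamma_m)$, closing the induction.

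There is essentially no hard step: the argument is a purely formal iteration of the single-chain theorem. The only point requiring care is verifying that the hypotheses of \Cref{thm:ControlledLogicallyIdentityOperator} are genuinely met at each nesting level. Concretely, the second application is made to $\exp(iB)$ where $B$ is a product of diagonal electric gauge fields with $A$, and one should confirm that this intermediate $B$ still commutes with each $z_i$, which is what makes the controlled-operator identity $\exp(i z_i B) = P_i^+ + P_i^- \exp(iB)$ valid in the proof of the underlying theorem. Since each $a(\gamma_j)$ is diagonal, $B$ inherits diagonality (and hence commutativity with every $z_i$) from $A$, so this condition propagates automatically through the induction and no genuine obstruction arises.
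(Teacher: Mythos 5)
Your proof is correct and is essentially the paper's own argument: the paper disposes of this corollary in a single line---``Applying \Cref{thm:ControlledLogicallyIdentityOperator} repeatedly, we immediately obtain this corollary''---and your induction on $m$, peeling off $a(\gamma_1)$ after applying the inductive hypothesis to $B \coloneqq a(\gamma_2)\cdots a(\gamma_m)A$, is precisely the formalization of that repeated application. Your added check that $B$ stays diagonal (so that $\exp(i z_i B) = P_i^+ + P_i^- \exp(iB)$ remains valid inside the theorem's proof) addresses a hypothesis the paper leaves implicit, but it does not constitute a different route.
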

These theorem and corollary will be used in the following subsections.

\subsection{S gate}
We start with the logical $S$ gate.
Our logical $S$ gate that acts on a single logical qubit $\gamma \in H_1(C)$ is defined as
\begin{align}
    \overline{S}(\gamma) = \exp(i\frac{\pi}{2}a(\gamma)^2).
\end{align}

We show here that the logical operation of this gate depends only on the homology class of $\gamma$, that is, 
$\overline{S}(\gamma_1) \sim \overline{S}(\gamma_2)$ if $\gamma_1 = \gamma_2 + \del_2 f$ for some $f \in C_2$.
Using the fact that $a$ is a linear function, we can decompose $\overline{S}(\gamma_1)$ as
\begin{align}
    \overline{S}(\gamma_1) 
    & = \overline{S}(\gamma_2) \cdot \exp(i \pi a(\gamma_2) a(\del_2 f)) \cdot \exp(i\frac{\pi}{2} a(\del_2 f)^2),
\end{align}
Actually, we can show that both $\exp(i \pi a(\gamma_2)a(\del_2 f))$ and $\exp(i\frac{\pi}{2}a(\del_2 f)^2)$ are logical identity operators as follows.

For $\exp(i \pi a(\gamma_2)a(\del_2 f))$, this is shown to be logically trivial by \Cref{thm:ControlledLogicallyIdentityOperator}.
If we set $\gamma \to \gamma_2$ and $A = a(\del_2 f)$, the operator $\mathcal{O}_A = \exp(i\pi a(\del_2 f)) = S^Z(f)$ is a stabilizer, and so it behaves trivially on the code space.
Then, from \Cref{thm:ControlledLogicallyIdentityOperator}, we see that $\exp(i \pi a(\gamma_2)a(\del_2 f))$ is also a logical identity operator.

Next, let us show that $\overline{S}(\del_2 f)$ is a logical identity operator.
There are two ways to show this.
The first way is what was given in the main text, using the fact that $\overline{S}(\del_2 f)$ agrees with the operator given by
\begin{align}
    \overline{S}(\del_2 f) & = \exp(i \frac{\pi}{2} \frac{I-S^Z(f)}{2}).
\end{align}
The second way is to show that $\overline{S}(\del_2 f)$ commutes with all the stabilizers and all the logical operators from \Cref{thm:IdentityCondition}.
Since $\overline{S}(\del_2 f)$ is constructed only from $z_i$'s, it trivially commutes with all the $Z$-type stabilizers and logical $Z$ operators, so here we check only the commutativity with the $X$-type stabilizers and logical $X$ operators.
Their commutation relation is calculated as
\begin{align}
    \overline{X}(\gamma) \overline{S}(\del_2 f) \overline{X}(\gamma)
    &= \exp(i\frac{\pi}{2} (a(\del_2 f) -2a(\del_2 f \cap \gamma)+ \inp{\del_2 f}{\gamma})^2) \\
    &= \exp(i\frac{\pi}{2} (a(\del_2 f) + \inp{f}{\del_2^T\gamma})^2)\exp(-2i\pi \Delta''),
\end{align}
where we used the commutation relation derived in \Cref{app:commutation_relation} and defined $\Delta''$ as $\Delta'' \coloneqq a(\del_2 f \cap \gamma)(a(\del_2 f) + \inp{f}{\del_2^T\gamma})$.
This $\Delta''$ is a diagonal matrix with integer eigenvalues in the computational basis, so we have $\exp(-2i\pi \Delta'') = I$.
 Regarding the term $\inp{f}{\del_2^T \gamma}$, it vanishes for both cases of $\gamma$ as follows:
\begin{align}
    \del_2^T \gamma &=
    \begin{cases}
        \del_2^T \del_1^T v & (v \in C_0) \text{ for } X\text{-type stabilizers} \\
        \del_2^T \tilde{\gamma} & (\tilde{\gamma} \in H^1(C)=\ker \del_2^T/\Im \del_1^T) \text{ for logical } X \text{ operator}
    \end{cases}
    \\
    & = 0.
\end{align}
Therefore, we find that the right hand side reduces to $\overline{S}(\del_2 f)$:
\begin{align}
    \overline{X}(\gamma) \overline{S}(\del_2 f) \overline{X}(\gamma) 
    = \exp(i\frac{\pi}{2} a(\del_2 f)^2) 
    = \overline{S}(\del_2 f),
\end{align}
which shows that $\overline{S}(\del_2 f)$ is a logical identity operator.

Putting these results together, we have
\begin{align}
    \overline{S}(\gamma_1) 
    & \sim \exp(i\frac{\pi}{2}a(\gamma_2)^2) \cdot I \cdot I  = \overline{S}(\gamma_2),
\end{align}
which means that the logical operation of our logical $S$ gate depends only on the homology class of its argument.

\subsection{Hadamard gate}
Here, we show that the logical Hadamard gate construed in the main text
\begin{align}
    \overline{H}(\gamma) = e^{-i\frac{\pi}{4}} \exp(\frac{i\pi}{2}a(\gamma)^2) \exp(\frac{i\pi}{2}b(\tilde{\gamma})^2) \exp(\frac{i\pi}{2}a(\gamma)^2)
\end{align}
depends only on the homology class of $\gamma$, i.e., 
\begin{align}
    \label{eq:logical_hadamard_gate_cohomology}
    \overline{H}(\gamma_1) \sim  \overline{H}(\gamma_2)
\end{align}
for any $\gamma_1, \gamma_2 \in \ker{\del_1}(C)$ satisfying $\gamma_1 = \gamma_2 + \del_2 f$ with some $f \in C_2$.

In parallel to the proof of the logical $S$ gate, we can show that $\exp(i \pi b(\tilde{\gamma}_1)^2)$ is logically equivalent to $\exp(i\pi b(\tilde{\gamma}_2)^2/2)$.
Since $\tilde{\gamma}_1$ and $\tilde{\gamma}$ are elements of $\ker{\del_2^T}$, we can decompose $\tilde{\gamma}_1$ with some vertex $v$ as $\tilde{\gamma}_1 = \tilde{\gamma}_2 + \del_1^T v$.
Let us substitute this into $\exp(i\pi b(\tilde{\gamma}_1)^2/2)$, which gives
\begin{align}
    \exp(\frac{i\pi}{2}b(\tilde{\gamma}_1)^2) 
    = \exp(\frac{i\pi}{2}b(\tilde{\gamma}_2)^2) \exp(\frac{i\pi}{2}b(\del_1^T v)^2) \exp(i\pi b(\tilde{\gamma}_2) b(\del_1^T v)).
\end{align}
We can show that the second and third terms are logical identity operators, in a similar way to the electric gauge field case.
As for the second term, it agrees with the operator given by
\begin{align}
    \exp(\frac{i\pi}{2}b(\del_1^T v)^2) = \exp(i \frac{\pi}{2} \frac{I - S^X(v)}{2}),
\end{align}
which is clearly a logical identity operator since $S^X(v)$ is a stabilizer.
We can show that the third term is a logical identity operator, taking into account that $\exp(i\pi b(e_i) b(\del_1^T v))$ is a product of controlled-$X$ stabilizer gates with the vertex $v$ in $+$ basis.
This is an example of the magnetic gauge field version of \Cref{thm:ControlledLogicallyIdentityOperator}.
Therefore, we find that 
\begin{align}
    \exp(\frac{i\pi}{2}b(\tilde{\gamma}_1)^2) \sim \exp(\frac{i\pi}{2}b(\tilde{\gamma}_2)^2).
\end{align}

Putting all things together, we find that the logical Hadamard gate with $\gamma_1$ is logically equivalent to that with $\gamma_2$.

\subsection{(Multi-)controlled-Z gate}

Here, we show that the logical operation of the logical CZ gate constructed in the main text
\begin{align}
    \overline{CZ}(\gamma_1, \gamma_2) = \exp(i \pi a(\gamma_1) a(\gamma_2))
\end{align}
depends only on the homology class of its arguments $\gamma_1$ and $\gamma_2$, i.e.,
\begin{align}
    \label{eq:logical_cz_gate_cohomology}
    \overline{CZ}(\gamma_1, \gamma_2) \sim \overline{CZ}(\gamma'_1, \gamma'_2).
\end{align}
for any $\gamma_1, \gamma_2, \gamma'_1, \gamma'_2 \in \ker{\del_1}(C)$ satisfying $\gamma_i = \gamma'_i + \del_2 f_i$ with some $f_i \in C_2$ for each $i=1,2$.

This can be shown straightforwardly as follows.
We find that $\overline{CZ}(\gamma_1, \gamma_2)$ can be decomposed with $\gamma'_1$ and $\gamma'_2$ as
\begin{multline}
    \exp(i\pi a(\gamma_1)a(\gamma_2)) = \exp(i\pi a(\gamma'_1) a(\gamma'_2))
    \\
    \times 
    \exp(i \pi a(\gamma'_1) a(\del_2 f_2))
    \exp(i \pi a(\del_2 f_1) a(\gamma'_2))
    \exp(i \pi a(\del_2 f_1)a(\del_2 f_2)).
\end{multline}
From \Cref{cor:ControlledLogicallyIdentityOperator}, we find that all the terms in the second line are logical identity operators since $\exp(i\pi a(\del_2 f_i))$ gives a stabilizer $S^Z(f_i)$ for each $i=1,2$.
Then, we can deduce the desired result in \Cref{eq:logical_cz_gate_cohomology}.

Similarly, we can show the logical operation of the logical $C^{m-1}Z$ gate
\begin{align}
    \overline{C^{m-1}Z}(\gamma_1, \ldots, \gamma_m) = \exp(i\pi a(\gamma_1) a(\gamma_2) \cdots a(\gamma_m))
\end{align}
depends only on the homology class of its arguments $\gamma_1, \ldots, \gamma_m$, i.e.,
\begin{align}
    \label{eq:logical_cmz_gate_cohomology}
    \overline{C^{m-1}Z}(\gamma_1, \ldots, \gamma_m) \sim \overline{C^{m-1}Z}(\gamma'_1, \ldots, \gamma'_m),
\end{align}
where $\gamma_i, \gamma'_i \in \ker{\del_1}(C)$ satisfies $\gamma_i = \gamma'_i + \del_2 f_i$ with some $f_i \in C_2$ for each $i=1,\ldots,m$.

Its proof is similar to the logical CZ gate case.
We can decompose $\overline{C^{m-1}Z}(\gamma_1, \ldots, \gamma_m)$ as 
\begin{multline}
    \exp(i\pi a(\gamma_1) a(\gamma_2) \cdots a(\gamma_m)) = \exp(i\pi a(\gamma'_1) a(\gamma_2) \cdots a(\gamma_{m-1}) a(\gamma_m))
    \\ \times \exp(i\pi a(\del_2 f_1) a(\gamma_2) \cdots a(\gamma_{m-1}) a(\gamma_m)).
\end{multline}
Here, the second term is a logical identity operator from \Cref{cor:ControlledLogicallyIdentityOperator}, since $\exp(i\pi a(\del_2 f_1)) = S^Z(f_1)$ is a $Z$-type stabilizer.
Thus, we have 
\begin{align}
    \exp(i\pi a(\gamma_1) a(\gamma_2) \cdots a(\gamma_m)) \sim \exp(i\pi a(\gamma'_1) a(\gamma_2) \cdots a(\gamma_{m-1}) a(\gamma_m)).
\end{align}
Repetition of this argument leads to the result that for any $ 1 \leq k \leq m$,
\begin{align}
    \exp(i\pi a(\gamma_1) a(\gamma_2) \cdots a(\gamma_m)) \sim 
    \exp(i\pi a(\gamma'_1) a(\gamma'_2) \cdots a(\gamma'_k) a(\gamma_{k+1}) \cdots a(\gamma_m)).
\end{align}
Finally, taking $k=m$, we arrive at the desired result in \Cref{eq:logical_cmz_gate_cohomology}.

\subsection{T gate}
In this subsection, we show the logical operation of the logical $T$ gate 
\begin{align}
    \overline{T}(\gamma) = \exp(i\pi \qty(\frac{1}{2}a(\gamma)^3 - \frac{3}{4} a(\gamma)^2 + \frac{1}{2}a(\gamma))),
\end{align}
depends only on the homology class of its argument $\gamma$.
That is, we show
\begin{align}
    \label{eq:logical_t_gate_cohomology}
    \overline{T}(\gamma_1) \sim \overline{T}(\gamma_2)
\end{align}
for any $\gamma_1, \gamma_2 \in \ker{\del_1}(C)$ satisfying $\gamma_1 = \gamma_2 + \del_2 f$ with some $f \in C_2$.

This can be shown as follows.
Substituting $\gamma_1 = \gamma_2 + \del_2 f$ into the definition of the logical $T$ gate, we find that $\overline{T}(\gamma_1)$ can be decomposed as
\begin{align}
    \overline{T}(\gamma_1) = &\overline{T}(\gamma_2) \times \notag \\
     & \overline{T}(\del_2 f) \exp(i\frac{3\pi}{2}a(\gamma_2) a(\del_2 f) (a(\gamma_2) -1)) \exp(i\frac{3\pi}{2} a(\gamma_2)a(\del_2 f)^2).
\end{align}
Here, we can show that all the terms in the second line are logical identity operators as follows.
First, $\overline{T}(\del_2 f)$ agrees with the operator given by
\begin{align}
    \overline{T}(\del_2 f) = \exp(i \frac{\pi}{4} \frac{I - S^Z(f)}{2}),
\end{align}
which is clearly a logical identity operator since $S^Z(f)$ is a stabilizer.
Alternatively, we can show that $\overline{T}(\del_2 f)$ commutes with all the stabilizers and logical operators from \Cref{thm:IdentityCondition}, in a similar way to the logical $S$ gate case.
Next, let us consider the second term, which can be decomposed as
\begin{align}
    \exp(i\frac{3\pi}{2}a(\gamma_2) a(\del_2 f) (a(\gamma_2) -1)) 
    & = \prod_{i_1 < i_2 } \exp(3i\pi a(e_{i_1}) a(e_{i_2}) a(\del_2 f)),
\end{align}
where we denoted the support of $\gamma_2$ as $\{i_k\} \coloneqq \supp(\gamma_2)$, and used \Cref{eq:useful_formula_a_gamma_squared}.
Here, each factor $\exp(3i\pi a(e_{i_1}) a(e_{i_2}) a(\del_2 f))$ is a logical identity operator from \Cref{cor:ControlledLogicallyIdentityOperator}, since $\exp(3i\pi a(\del_2 f)) = S^Z(f)$ is a stabilizer.
Finally, we consider the last term, which can be written as
\begin{align}
    \exp(i\frac{3\pi}{2} a(\gamma_2)a(\del_2 f)^2) 
     = \exp(i a(\gamma_2) \frac{3\pi}{2}a(\del_2 f)^2)
\end{align}
Here, $\exp(3i\pi a(\del_2 f)^2/2) = \exp(-i\pi a(\del_2 f)^2/2) = \overline{S}^\dagger(\del_2 f)$ is the inverse of the logical $S$ gate with $\del_2 f$, which is a logical identity operator, as we have shown in the logical $S$ gate section.
Therefore, we find that the last term is also a logical identity operator from \Cref{thm:ControlledLogicallyIdentityOperator}.

Putting these results together, we have
\begin{align}
    \overline{T}(\gamma_1) \sim \overline{T}(\gamma_2).
\end{align}
which means that the logical $T$ gate depends only on the homology class of its argument on the code space.

\bibliographystyle{JHEP}
\bibliography{ref}

@article{Kribs:2004dqj,
    author = "Kribs, David and Laflamme, Raymond and Poulin, David",
    title = "{Unified and Generalized Approach to Quantum Error Correction}",
    eprint = "quant-ph/0412076",
    archivePrefix = "arXiv",
    doi = "10.1103/PhysRevLett.94.180501",
    journal = "Phys. Rev. Lett.",
    volume = "94",
    number = "18",
    pages = "180501",
    year = "2005"
}

@article{Kribs:2006doz,
    author = "Kribs, David W. and Laflamme, Raymond and Poulin, David and Lesosky, Maia",
    title = "{Operator quantum error correction}",
    eprint = "quant-ph/0504189",
    archivePrefix = "arXiv",
    doi = "10.26421/QIC6.4-5-6",
    journal = "Quant. Inf. Comput.",
    volume = "6",
    number = "4-5",
    pages = "382--399",
    year = "2006"
}

@article{Tansuwannont:2025riy,
    author = "Tansuwannont, Theerapat and Takada, Yugo and Fujii, Keisuke",
    title = "{Clifford gates with logical transversality for self-dual CSS codes}",
    eprint = "2503.19790",
    archivePrefix = "arXiv",
    primaryClass = "quant-ph",
    month = "3",
    year = "2025"
}

@article{Anderson:2014voa,
    author = "Anderson, Jonas T. and Jochym-O'Connor, Tomas",
    title = "{Classification of transversal gates in qubit stabilizer codes}",
    eprint = "1409.8320",
    archivePrefix = "arXiv",
    primaryClass = "quant-ph",
    doi = "10.26421/QIC16.9-10-3",
    journal = "Quant. Inf. Comput.",
    volume = "16",
    number = "9-10",
    pages = "0771--0802",
    year = "2016"
}

@article{Weinberg:1967tq,
    author = "Weinberg, Steven",
    title = "{A Model of Leptons}",
    doi = "10.1103/PhysRevLett.19.1264",
    journal = "Phys. Rev. Lett.",
    volume = "19",
    pages = "1264--1266",
    year = "1967"
}

@article{Creutz:1983ev,
    author = "Creutz, Michael and Jacobs, Laurence and Rebbi, Claudio",
    title = "{Monte Carlo Computations in Lattice Gauge Theories}",
    reportNumber = "BNL-32438, NSF-ITP-83-07",
    doi = "10.1016/0370-1573(83)90016-9",
    journal = "Phys. Rept.",
    volume = "95",
    pages = "201--282",
    year = "1983"
}

@article{Creutz:1982dn,
    author = "Creutz, Michael and Okawa, Masanori",
    title = "{Generalized Actions in $Z(p$) Lattice Gauge Theory}",
    reportNumber = "BNL-32052",
    doi = "10.1016/0550-3213(83)90220-1",
    journal = "Nucl. Phys. B",
    volume = "220",
    pages = "149--166",
    year = "1983"
}

@article{Petcher:1980cq,
    author = "Petcher, D. and Weingarten, D. H.",
    title = "{Monte Carlo Calculations and a Model of the Phase Structure for Gauge Theories on Discrete Subgroups of SU(2)}",
    reportNumber = "Print-80-0404 (INDIANA), IUHET-53",
    doi = "10.1103/PhysRevD.22.2465",
    journal = "Phys. Rev. D",
    volume = "22",
    pages = "2465",
    year = "1980"
}

@article{Bhanot:1981xp,
    author = "Bhanot, G. and Rebbi, C.",
    title = "{Monte Carlo Simulations of Lattice Models With Finite Subgroups of SU(3) as Gauge Groups}",
    reportNumber = "Print-81-0392 (BNL)",
    doi = "10.1103/PhysRevD.24.3319",
    journal = "Phys. Rev. D",
    volume = "24",
    pages = "3319",
    year = "1981"
}

@article{Bhanot:1981pj,
    author = "Bhanot, Gyan",
    title = "{SU(3) Lattice Gauge Theory in Four-dimensions With a Modified Wilson Action}",
    reportNumber = "Print-81-0779 (IAS,PRINCETON), BNL-30161",
    doi = "10.1016/0370-2693(82)91207-2",
    journal = "Phys. Lett. B",
    volume = "108",
    pages = "337--340",
    year = "1982"
}

@article{Campbell:2019cyo,
    author = "Campbell, Earl T.",
    title = "{A theory of single-shot error correction for adversarial noise}",
    eprint = "1805.09271",
    archivePrefix = "arXiv",
    primaryClass = "quant-ph",
    doi = "10.1088/2058-9565/aafc8f",
    journal = "Quantum Sci. Technol.",
    volume = "4",
    number = "2",
    pages = "025006",
    year = "2019"
}

@article{Shor:1994jg,
    author = "Shor, Peter W.",
    title = "{Polynomial time algorithms for prime factorization and discrete logarithms on a quantum computer}",
    eprint = "quant-ph/9508027",
    archivePrefix = "arXiv",
    doi = "10.1137/S0097539795293172",
    journal = "SIAM J. Sci. Statist. Comput.",
    volume = "26",
    pages = "1484",
    year = "1997"
}

@article{Grover:1996rk,
    author = "Grover, Lov K.",
    title = "{A fast quantum mechanical algorithm for database search}",
    eprint = "quant-ph/9605043",
    archivePrefix = "arXiv",
    doi = "10.1145/237814.237866",
    month = "5",
    year = "1996"
}

@article{Bluvstein:2023zmt,
    author = "Bluvstein, Dolev and others",
    title = "{Logical quantum processor based on reconfigurable atom arrays}",
    eprint = "2312.03982",
    archivePrefix = "arXiv",
    primaryClass = "quant-ph",
    doi = "10.1038/s41586-023-06927-3",
    journal = "Nature",
    volume = "626",
    number = "7997",
    pages = "58--65",
    year = "2024"
}

@article{Rodriguez:2024bhh,
    author = "Rodriguez, Pedro Sales and others",
    title = "{Experimental demonstration of logical magic state distillation}",
    eprint = "2412.15165",
    archivePrefix = "arXiv",
    primaryClass = "quant-ph",
    doi = "10.1038/s41586-025-09367-3",
    journal = "Nature",
    volume = "645",
    number = "8081",
    pages = "620--625",
    year = "2025"
}

@article{Wineland:1997mg,
    author = "Wineland, D. J. and Monroe, C. and Itano, W. M. and Leibfried, D. and King, B. E. and Meekhof, D. M.",
    title = "{Experimental issues in coherent quantum-state manipulation of trapped atomic ions}",
    eprint = "quant-ph/9710025",
    archivePrefix = "arXiv",
    doi = "10.6028/jres.103.019",
    journal = "J. Res. Natl. Inst. Stand. Tech.",
    volume = "103",
    number = "3",
    pages = "259",
    year = "1998"
}

@article{Kielpinski:2002wbd,
    author = "Kielpinski, D. and Monroe, C. and Wineland, D. J.",
    title = "{Architecture for a large-scale ion-trap quantum computer}",
    doi = "10.1038/nature00784",
    journal = "Nature",
    volume = "417",
    pages = "709--711",
    year = "2002"
}

@article{Ransford:2025ksn,
    author = "Ransford, Anthony and others",
    title = "{Helios: A 98-qubit trapped-ion quantum computer}",
    eprint = "2511.05465",
    archivePrefix = "arXiv",
    primaryClass = "quant-ph",
    month = "11",
    year = "2025"
}

@article{Alexander:2024vys,
    author = "Alexander, Koen and others",
    title = "{A manufacturable platform for photonic quantum computing}",
    eprint = "2404.17570",
    archivePrefix = "arXiv",
    primaryClass = "quant-ph",
    doi = "10.1038/s41586-025-08820-7",
    journal = "Nature",
    volume = "641",
    number = "8064",
    pages = "876--883",
    year = "2025"
}

@article{Zhong:2020bnd,
    author = "Zhong, Tian and others",
    title = "{Quantum computational advantage using photons}",
    eprint = "2005.03850",
    archivePrefix = "arXiv",
    primaryClass = "quant-ph",
    doi = "10.1126/science.abe8770",
    journal = "Science",
    volume = "370",
    number = "6523",
    pages = "1460--1463",
    year = "2020"
}

@article{Zeng:2007uho,
    author = "Zeng, Bei and Cross, Andrew and Chuang, Isaac L.",
    title = "{Transversality versus Universality for Additive Quantum Codes}",
    eprint = "0706.1382",
    archivePrefix = "arXiv",
    primaryClass = "quant-ph",
    doi = "10.1109/TIT.2011.2161917",
    journal = "IEEE Trans. Info. Theor.",
    volume = "57",
    number = "9",
    pages = "6272--6284",
    year = "2011"
}

@article{Eastin:2009tem,
    author = "Eastin, Bryan and Knill, Emanuel",
    title = "{Restrictions on Transversal Encoded Quantum Gate Sets}",
    eprint = "0811.4262",
    archivePrefix = "arXiv",
    primaryClass = "quant-ph",
    doi = "10.1103/PhysRevLett.102.110502",
    journal = "Phys. Rev. Lett.",
    volume = "102",
    number = "11",
    pages = "110502",
    year = "2009"
}

@article{Bravyi:2012rnv,
    author = "Bravyi, Sergey and Koenig, Robert",
    title = "{Classification of Topologically Protected Gates for Local Stabilizer Codes}",
    eprint = "1206.1609",
    archivePrefix = "arXiv",
    primaryClass = "quant-ph",
    doi = "10.1103/physrevlett.110.170503",
    journal = "Phys. Rev. Lett.",
    volume = "110",
    number = "17",
    pages = "170503",
    year = "2013"
}

@article{Bombin:2009tbs,
    author = "Bombin, H. and Martin-Delgado, M. A.",
    title = "{Quantum measurements and gates by code deformation}",
    eprint = "0704.2540",
    archivePrefix = "arXiv",
    primaryClass = "quant-ph",
    doi = "10.1088/1751-8113/42/9/095302",
    journal = "J. Phys. A",
    volume = "42",
    number = "9",
    pages = "095302",
    year = "2009"
}

@article{Knill:2004ctr,
    author = "Knill, E.",
    title = "{Fault-Tolerant Postselected Quantum Computation: Schemes}",
    eprint = "quant-ph/0402171",
    archivePrefix = "arXiv",
    doi = "10.48550/arXiv.quant-ph/0402171",
    month = "2",
    year = "2004"
}

@article{Bravyi:2004isx,
    author = "Bravyi, Sergei and Kitaev, Alexei",
    title = "{Universal quantum computation with ideal Clifford gates and noisy ancillas}",
    eprint = "quant-ph/0403025",
    archivePrefix = "arXiv",
    doi = "10.1103/PhysRevA.71.022316",
    journal = "Phys. Rev. A",
    volume = "71",
    number = "2",
    pages = "022316",
    year = "2005"
}

@article{Kogut:1974ag,
    author = "Kogut, John B. and Susskind, Leonard",
    title = "{Hamiltonian Formulation of Wilson's Lattice Gauge Theories}",
    reportNumber = "Print-74-1186 (CORNELL)",
    doi = "10.1103/PhysRevD.11.395",
    journal = "Phys. Rev. D",
    volume = "11",
    pages = "395--408",
    year = "1975"
}

@article{Kogut:1979wt,
    author = "Kogut, John B.",
    title = "{An Introduction to Lattice Gauge Theory and Spin Systems}",
    reportNumber = "ILL-TH-79-4",
    doi = "10.1103/RevModPhys.51.659",
    journal = "Rev. Mod. Phys.",
    volume = "51",
    pages = "659",
    year = "1979"
}

@article{Hastings:2020jbo,
    author = "Hastings, Matthew B. and Haah, Jeongwan and O'Donnell, Ryan",
    title = "{Fiber bundle codes: breaking the n1/2 polylog(n) barrier for Quantum LDPC codes}",
    eprint = "2009.03921",
    archivePrefix = "arXiv",
    primaryClass = "quant-ph",
    doi = "10.1145/3406325.3451005",
    month = "9",
    year = "2020"
}

@article{Panteleev:2019upy,
    author = "Panteleev, Pavel and Kalachev, Gleb",
    title = "{Degenerate Quantum LDPC Codes With Good Finite Length Performance}",
    eprint = "1904.02703",
    archivePrefix = "arXiv",
    primaryClass = "quant-ph",
    doi = "10.22331/q-2021-11-22-585",
    journal = "Quantum",
    volume = "5",
    pages = "585",
    year = "2021"
}

@inproceedings{Panteleev:2021wvc,
    author = "Panteleev, Pavel and Kalachev, Gleb",
    title = "{Asymptotically good Quantum and locally testable classical LDPC codes}",
    booktitle = "{54th Annual ACM Symposium on Theory of Computing}",
    eprint = "2111.03654",
    archivePrefix = "arXiv",
    primaryClass = "cs.IT",
    doi = "10.1145/3519935.3520017",
    month = "11",
    year = "2021"
}

@article{Kitaev1997quantum,
  title={Quantum computations: algorithms and error correction},
  author={Kitaev, A Yu},
  journal={Russian Mathematical Surveys},
  volume={52},
  number={6},
  pages={1191},
  year={1997},
  publisher={IOP Publishing}
}

@article{Bombin:2006cd,
    author = "Bombin, H. and Martin-Delgado, M. A.",
    title = "{Homological error correction: Classical and quantum codes}",
    eprint = "quant-ph/0605094",
    archivePrefix = "arXiv",
    doi = "10.1063/1.2731356",
    journal = "J. Math. Phys.",
    volume = "48",
    pages = "052105",
    year = "2007"
}

@article{Shor:1995hbe,
    author = "Shor, Peter W.",
    title = "{Scheme for reducing decoherence in quantum computer memory}",
    doi = "10.1103/physreva.52.r2493",
    journal = "Phys. Rev. A",
    volume = "52",
    number = "4",
    pages = "R2493--R2496",
    year = "1995"
}

@article{Gottesman:1997zz,
    author = "Gottesman, Daniel",
    title = "{Stabilizer codes and quantum error correction}",
    eprint = "quant-ph/9705052",
    archivePrefix = "arXiv",
    month = "5",
    journal = "",
    year = "1997"
}

@article{Kitaev:1997wr,
    author = "Kitaev, A. Yu.",
    title = "{Fault tolerant quantum computation by anyons}",
    eprint = "quant-ph/9707021",
    archivePrefix = "arXiv",
    doi = "10.1016/S0003-4916(02)00018-0",
    journal = "Annals Phys.",
    volume = "303",
    pages = "2--30",
    year = "2003"
}

@article{Barkeshli:2022edm,
    author = "Barkeshli, Maissam and Chen, Yu-An and Hsin, Po-Shen and Kobayashi, Ryohei",
    title = "{Higher-group symmetry in finite gauge theory and stabilizer codes}",
    eprint = "2211.11764",
    archivePrefix = "arXiv",
    primaryClass = "cond-mat.str-el",
    doi = "10.21468/SciPostPhys.16.4.089",
    journal = "SciPost Phys.",
    volume = "16",
    number = "4",
    pages = "089",
    year = "2024"
}

@article{Zhu:2023xfg,
    author = "Zhu, Guanyu and Sikander, Shehryar and Portnoy, Elia and Cross, Andrew W. and Brown, Benjamin J.",
    title = "{Non-Clifford and parallelizable fault-tolerant logical gates on constant and almost-constant rate homological quantum LDPC codes via higher symmetries}",
    eprint = "2310.16982",
    archivePrefix = "arXiv",
    primaryClass = "quant-ph",
    month = "10",
    year = "2023"
}

@article{Lin:2024uhb,
    author = "Lin, Ting-Chun",
    title = "{Transversal non-Clifford gates for quantum LDPC codes on sheaves}",
    eprint = "2410.14631",
    archivePrefix = "arXiv",
    primaryClass = "quant-ph",
    month = "10",
    year = "2024"
}

@inproceedings{Golowich:2024ogv,
    author = "Golowich, Louis and Lin, Ting-Chun",
    title = "{Quantum LDPC Codes with Transversal Non-Clifford Gates via Products of Algebraic Codes}",
    booktitle = "{57th Annual ACM Symposium on Theory of Computing}",
    eprint = "2410.14662",
    archivePrefix = "arXiv",
    primaryClass = "quant-ph",
    doi = "10.1145/3717823.3718139",
    month = "10",
    year = "2024"
}

@misc{breuckmann2024cupsgatesicohomology,
      title={Cups and Gates I: Cohomology invariants and logical quantum operations}, 
      author={Nikolas P. Breuckmann and Margarita Davydova and Jens N. Eberhardt and Nathanan Tantivasadakarn},
      year={2024},
      eprint={2410.16250},
      archivePrefix={arXiv},
      primaryClass={quant-ph},
      url={https://arxiv.org/abs/2410.16250}, 
}

@misc{hsin2024classifyinglogicalgatesquantum,
      title={Classifying Logical Gates in Quantum Codes via Cohomology Operations and Symmetry}, 
      author={Po-Shen Hsin and Ryohei Kobayashi and Guanyu Zhu},
      year={2024},
      eprint={2411.15848},
      archivePrefix={arXiv},
      primaryClass={quant-ph},
      url={https://arxiv.org/abs/2411.15848}, 
}

@misc{zhu2025topologicaltheoryqldpcnonclifford,
      title={A topological theory for qLDPC: non-Clifford gates and magic state fountain on homological product codes with constant rate and beyond the $N^{1/3}$ distance barrier}, 
      author={Guanyu Zhu},
      year={2025},
      eprint={2501.19375},
      archivePrefix={arXiv},
      primaryClass={quant-ph},
      url={https://arxiv.org/abs/2501.19375}, 
}

@article{Kobayashi:2025cfh,
    author = "Kobayashi, Ryohei and Zhu, Guanyu and Hsin, Po-Shen",
    title = "{Clifford Hierarchy Stabilizer Codes: Transversal Non-Clifford Gates and Magic}",
    eprint = "2511.02900",
    archivePrefix = "arXiv",
    primaryClass = "quant-ph",
    month = "11",
    year = "2025"
}

@article{Fowler_2012,
   title={Surface codes: Towards practical large-scale quantum computation},
   volume={86},
   ISSN={1094-1622},
   url={http://dx.doi.org/10.1103/PhysRevA.86.032324},
   DOI={10.1103/physreva.86.032324},
   number={3},
   journal={Physical Review A},
   publisher={American Physical Society (APS)},
   author={Fowler, Austin G. and Mariantoni, Matteo and Martinis, John M. and Cleland, Andrew N.},
   year={2012},
   month=sep }

@article{Horsman_2012,
   title={Surface code quantum computing by lattice surgery},
   volume={14},
   ISSN={1367-2630},
   url={http://dx.doi.org/10.1088/1367-2630/14/12/123011},
   DOI={10.1088/1367-2630/14/12/123011},
   number={12},
   journal={New Journal of Physics},
   publisher={IOP Publishing},
   author={Horsman, Dominic and Fowler, Austin G and Devitt, Simon and Meter, Rodney Van},
   year={2012},
   month=dec, pages={123011} }

@article{Litinski_2019,
   title={A Game of Surface Codes: Large-Scale Quantum Computing with Lattice Surgery},
   volume={3},
   ISSN={2521-327X},
   url={http://dx.doi.org/10.22331/q-2019-03-05-128},
   DOI={10.22331/q-2019-03-05-128},
   journal={Quantum},
   publisher={Verein zur Forderung des Open Access Publizierens in den Quantenwissenschaften},
   author={Litinski, Daniel},
   year={2019},
   month=mar, pages={128} }

@misc{poulin2008iterativedecodingsparsequantum,
      title={On the iterative decoding of sparse quantum codes}, 
      author={David Poulin and Yeojin Chung},
      year={2008},
      eprint={0801.1241},
      archivePrefix={arXiv},
      primaryClass={quant-ph},
      url={https://arxiv.org/abs/0801.1241}, 
}

@article{Tillich_2014,
   title={Quantum LDPC Codes With Positive Rate and Minimum Distance Proportional to the Square Root of the Blocklength},
   volume={60},
   ISSN={1557-9654},
   url={http://dx.doi.org/10.1109/TIT.2013.2292061},
   DOI={10.1109/tit.2013.2292061},
   number={2},
   journal={IEEE Transactions on Information Theory},
   publisher={Institute of Electrical and Electronics Engineers (IEEE)},
   author={Tillich, Jean-Pierre and Zemor, Gilles},
   year={2014},
   month=feb, pages={1193–1202} }

@article{Calderbank_1996,
   title={Good quantum error-correcting codes exist},
   volume={54},
   ISSN={1094-1622},
   url={http://dx.doi.org/10.1103/PhysRevA.54.1098},
   DOI={10.1103/physreva.54.1098},
   number={2},
   journal={Physical Review A},
   publisher={American Physical Society (APS)},
   author={Calderbank, A. R. and Shor, Peter W.},
   year={1996},
   month=aug, pages={1098–1105} }

@article{acharya2024quantum,
  title={Quantum error correction below the surface code threshold},
  author={Acharya, Rajeev and Aghababaie-Beni, Laleh and Aleiner, Igor and Andersen, Trond I and Ansmann, Markus and Arute, Frank and Arya, Kunal and Asfaw, Abraham and Astrakhantsev, Nikita and Atalaya, Juan and others},
  journal={arXiv preprint arXiv:2408.13687},
  year={2024}
}

@article{Dennis_2002,
   title={Topological quantum memory},
   volume={43},
   ISSN={1089-7658},
   url={http://dx.doi.org/10.1063/1.1499754},
   DOI={10.1063/1.1499754},
   number={9},
   journal={Journal of Mathematical Physics},
   publisher={AIP Publishing},
   author={Dennis, Eric and Kitaev, Alexei and Landahl, Andrew and Preskill, John},
   year={2002},
   month=sep, pages={4452–4505} }

@article{barends2014superconducting,
  title={Superconducting quantum circuits at the surface code threshold for fault tolerance},
  author={Barends, R and Kelly, J and Megrant, A and Veitia, A and Sank, D and Jeffrey, E and White, TC and Mutus, J and Fowler, AG and Campbell, B and others},
  journal={Nature},
  volume={508},
  number={7497},
  pages={500--503},
  year={2014},
  publisher={Nature Publishing Group UK London}
}

@article{Takita_2017,
   title={Experimental Demonstration of Fault-Tolerant State Preparation with Superconducting Qubits},
   volume={119},
   ISSN={1079-7114},
   url={http://dx.doi.org/10.1103/PhysRevLett.119.180501},
   DOI={10.1103/physrevlett.119.180501},
   number={18},
   journal={Physical Review Letters},
   publisher={American Physical Society (APS)},
   author={Takita, Maika and Cross, Andrew W. and Córcoles, A.D. and Chow, Jerry M. and Gambetta, Jay M.},
   year={2017},
   month=oct }

@article{Steane:1996ghp,
    author = "Steane, A. M.",
    title = "{Error Correcting Codes in Quantum Theory}",
    doi = "10.1103/physrevlett.77.793",
    journal = "Phys. Rev. Lett.",
    volume = "77",
    number = "5",
    pages = "793--797",
    year = "1996"
}

@article{Gottesman:1999tea,
    author = "Gottesman, Daniel and Chuang, Isaac L.",
    title = "{Demonstrating the viability of universal quantum computation using teleportation and single-qubit operations}",
    eprint = "quant-ph/9908010",
    archivePrefix = "arXiv",
    doi = "10.1038/46503",
    journal = "Nature",
    volume = "402",
    number = "6760",
    pages = "390--393",
    year = "1999"
}

\end{document}